\pgfplotsset{compat=newest}
\definecolor{matlabBlue}{rgb}{0,0.4470,0.7410}
\definecolor{matlabBlack}{rgb}{0,0,0}
\newcommand{\redsquare}{\tikz\fill[matlabBlue, opacity=0.35] (0,0) rectangle (2mm,2mm);}
\newcommand{\greensquare}{\tikz\fill[matlabBlack, opacity=0.15] (0,0) rectangle (2mm,2mm);}
\newcounter{theorem}
\newcounter{definition}
\newcounter{lemma}
\newcounter{claim}
\newcounter{problem}
\newcounter{proposition}
\newcounter{corollary}
\newcounter{construction}
\newcounter{example}
\newcounter{xca}
\newcounter{comments}
\newcounter{remark}
\newcounter{assumption}
\newtheorem{theorem}[theorem]{Theorem}
\newtheorem{lemma}[lemma]{Lemma}
\newtheorem{problem}[problem]{Problem}
\newtheorem{definition}[definition]{Definition}
\newtheorem{remark}[remark]{Remark}
\numberwithin{equation}{section}
\DeclareFontFamily{U}{stix2bb}{}
\DeclareFontShape{U}{stix2bb}{m}{n} {<-> stix2-mathbb}{}
\newtcolorbox{resp}[1][]{%
	enhanced jigsaw,%
	colback=gray!5!white,%
	colframe=gray!80!black,%
	size=small,%
	boxrule=1pt,%
	halign title=flush center,%
	coltitle=black,%
	breakable,%
	drop shadow=black!50!white,%
	attach boxed title to top left={xshift=1cm,yshift=-\tcboxedtitleheight/2,yshifttext=-\tcboxedtitleheight/2},%
	minipage boxed title=3cm,%
	boxed title style={%
		colback=white,%
		size=fbox,%
		boxrule=1pt,%
		boxsep=2pt,%
		underlay={%
			\coordinate (dotA) at ($(interior.west) + (-0.5pt,0)$);
			\coordinate (dotB) at ($(interior.east) + (0.5pt,0)$);
			\begin{scope}[gray!80!black]
				\fill (dotA) circle (2pt);
				\fill (dotB) circle (2pt);
			\end{scope}
		}%
	},%
	#1%
}
\newcommand{\CLbr}{\mathcal{{C}}_1\, \mathcal{R}_1(x,x_h)}
\newcommand{\DLbr}{\mathcal{{C}}_2\, \mathcal{R}_2(x,x_h)}
\DeclareRobustCommand\sampleline[1]{%
	\tikz\draw[#1] (0,0) (0,\the\dimexpr\fontdimen22\textfont2\relax)
	-- (2em,\the\dimexpr\fontdimen22\textfont2\relax);%
}
\newcommand{\dashedline}[1]{\tikz \draw [black, thick, dashed] (0,0) -- (#1,0);}
\newcommand{\dashedlinea}[1]{\tikz \draw [blue, thick, dashed] (0,0) -- (#1,0);}
\def\BibTeX{{\rm B\kern-.05em{\sc i\kern-.025em b}\kern-.08em
		T\kern-.1667em\lower.7ex\hbox{E}\kern-.125emX}}
	\patchcmd{\@oddhead}{\\[-19pt]}{\\[-8pt]}{}{}%
	\patchcmd{\@evenhead}{\\[-19pt]}{\\[-8pt]}{}{}%
\definecolor{blue(ryb)}{rgb}{0.01, 0.28, 1.0}
\definecolor{fashionfuchsia}{rgb}{0.96, 0.0, 0.63}
\let\NAT@parse\undefined
\definecolor{cerise}{rgb}{0.87, 0.19, 0.39}
\def\@opargbegintheorem#1#2#3{\textit{#1\ #2} \textit{(#3):}}
\begin{document}
	
	\title{A Data-Driven Krasovskii-Based Approach for Safety Controller Design of Time-Delayed Uncertain Polynomial Systems}
	\author{ \IEEEmembership{}	
		\thanks{}
	}
	
	\author{Omid Akbarzadeh, \IEEEmembership{Student Member,~IEEE}, MohammadHossein Ashoori, \IEEEmembership{Student Member,~IEEE}, Amy Nejati,  \IEEEmembership{Senior Member,~IEEE}, and 	Abolfazl Lavaei, \IEEEmembership{Senior Member,~IEEE}
		\thanks{All the authors are with the School of Computing, Newcastle University, United Kingdom. Email:
			\texttt{o.akbarzadeh2@newcastle.ac.uk},
			\texttt{m.ashoori2@newcastle.ac.uk},
			\texttt{amy.nejati@newcastle.ac.uk},
			\texttt{abolfazl.lavaei@newcastle.ac.uk}. }
	}
	
	\maketitle
\begin{abstract}
We develop a data-driven framework for the synthesis of \emph{robust Krasovskii control barrier certificates} (RK-CBC) and corresponding robust safety controllers (R-SC) for discrete-time input-affine uncertain polynomial systems with unknown dynamics, while explicitly accounting for \emph{unknown-but-bounded disturbances} and \emph{time-invariant delays} using only observed input–state data. Although control barrier certificates have been extensively studied for safety analysis of control systems, existing work on unknown systems with time delays, particularly in the presence of disturbances, remains limited. The challenge of safety synthesis for such systems stems from two main factors: first, the system’s mathematical model is unavailable; and second, the safety conditions should explicitly incorporate the effects of time delays on system evolution during the synthesis process, while remaining robust to unknown disturbances. To address these challenges, we develop a data-driven framework based on Krasovskii control barrier certificates, extending the classical CBC formulation for delay-free systems to explicitly account for time delays by aggregating delayed components within the barrier construction. The proposed framework relies solely on input–state data collected over a finite time horizon, enabling the direct synthesis of RK-CBC and R-SC from observed trajectories without requiring an explicit system model. The synthesis is cast as a \emph{data-driven} sum-of-squares (SOS) optimization program, yielding a structured design methodology. As a result, robust safety is guaranteed in the presence of unknown disturbances and time delays over an infinite time horizon. The effectiveness of the proposed method is demonstrated through three case studies, including two \emph{physical systems}.
\end{abstract}

\begin{IEEEkeywords}
	Data-driven Krasovskii-based control, time-delayed polynomial systems, Krasovskii control barrier certificates, formal methods.
\end{IEEEkeywords}

\section{Introduction}\label{sec: intro}
\IEEEPARstart{T}{IME} delays are an intrinsic feature of modern control loops, arising from latencies in sensing and actuation, as well as from computation, scheduling, and communication over shared networks. Such delays are pervasive in application domains including communications~\cite{Seborg2004} and process control~\cite{Srikant2004}, where systems operate under strict safety requirements. As control architectures increasingly rely on networked and shared communication infrastructures, the prevalence of safety-critical systems subject to time delays is expected to grow, further amplifying the need for rigorous safety analysis and controller synthesis methods that explicitly account for delayed system behavior. The presence of delays, while frequently unavoidable, can fundamentally alter system behavior, leading to reduced stability margins, degraded performance, and potential violations of safety constraints that would otherwise be satisfied in delay-free settings.

In spite of these effects, safety analysis and controller synthesis for uncertain discrete-time nonlinear systems with unknown models and time delays remain relatively underexplored and technically demanding. The difficulty of this problem is inherently multifaceted: beyond the absence of an explicit mathematical description, the system is subject to uncertainties and delayed dynamics that directly affect its evolution. These factors complicate both the formulation of safety conditions and the synthesis of controllers aimed at providing formal guarantees.

\begin{figure}[tbp]
	\centering
	\includegraphics[width=0.95\linewidth]{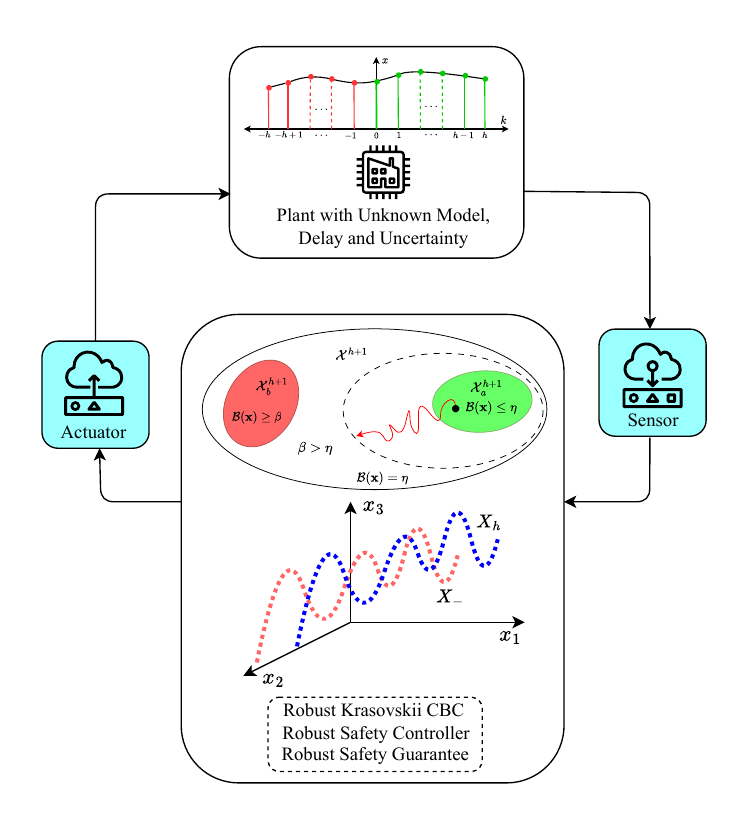}
	\caption{Architecture of the proposed framework, highlighting the key components for data-driven safety controller synthesis, with $X_{-}$ and $X_{h}$ corresponding to the collected data sets (cf. \eqref{eq: datarep1}).}

	\label{fig:sys}
\end{figure}
To address safety controller design in the context of uncertain control systems without time delays, a significant body of literature employs control barrier certificates (CBC), initially introduced in~\cite{prajna2004safety}. Similar to Lyapunov functions, CBC impose conditions on both the certificate and its evolution along system trajectories. By selecting an initial level set of CBC that corresponds to a specified set of initial states, one can create a separation between the unsafe region and all admissible evolutions, thereby providing (probabilistic) safety guarantees~\cite{santoyo2021barrier,jahanshahi2022compositional,nejati2024context,borrmann2015control,zaker2024compositional,lavaei2024scalable,ames2019control,lavaei2022automated}. While promising, the approaches in this body of work predominantly assume delay-free system dynamics, and their direct extension to systems with time delays is nontrivial.

In the model-based setting, several studies have explored safety guarantees for systems with delayed dynamics. The work in~\cite{1582846} generalizes barrier certificates to time-delay systems through the use of Razumikhin- and Krasovskii-type functionals for deterministic nonlinear polynomial models, with an emphasis on safety verification rather than controller synthesis. Complementarily,~\cite{Ames_Safety} proposes safety functionals that certify system safety without relying on delay approximations for deterministic nonlinear systems. Building on this line of research,~\cite{Ames_Safety_1} introduces control barrier functionals for nonlinear systems with state delays, enabling the construction of safety filters. More recently,~\cite{liu2023safety} extends barrier-functional approaches to time-delay systems affected by disturbances by incorporating input-to-state safety concepts. In contrast to these prior works, this paper addresses safety synthesis for \emph{unknown} polynomial systems subject to both time delays and unknown-but-bounded disturbances, a setting that has not been explicitly considered in the existing literature. It is also worth noting that a substantial body of literature has examined the stability of control systems subject to time delays; however, a detailed discussion of these results is beyond the scope of this paper (see \emph{e.g.,}~\cite{1583086,REN2022110563,Fridman}).
 
Since the design of CBC in the aforementioned literature relies on accurate knowledge of {the} system’s mathematical dynamics, often rendering model-based approaches impractical, data-driven techniques have gained increasing attention by enabling analysis directly from system measurements without requiring explicit models. These methods are mainly divided into \emph{indirect} and \emph{direct} approaches, each featuring distinct methodologies and application areas~\cite{Hou2013model,dorfler2022bridging,nejati2023data}. In particular, \emph{indirect} data-driven methods involve system identification, leveraging a wide array of robust tools from model-based control techniques following the identification phase. However, these methods face challenges related to computational complexity in two critical phases: first, during the model identification process, and later in solving the model-based problem.  Conversely, \emph{direct} data-driven approaches bypass the system identification step, applying system measurements directly to analyze unknown systems~\cite{dorfler2022bridging}. 

Over the past few years, data-driven approaches have emerged as a powerful alternative for the safety/stability certification of systems with unknown dynamics (\emph{e.g.,}~\cite{bisoffi2020controller,lavaei2026dissipativity,Bartocci-Data-Driven,nejati2022data,akbarzadeh2025formal,zaker2025data}). A recent study~\cite{10804185} advances direct data-driven methods for safety synthesis of discrete-time polynomial systems; however, its scope is limited to delay-free systems without uncertainties. In contrast, the framework proposed in this work establishes \emph{robust} safety guarantees for systems subject to unknown-but-bounded disturbances and time-invariant delays. Related data-driven studies~\cite{Fridman-data,Kong} focus on the synthesis of state-feedback controllers ensuring \emph{stability} for \emph{linear} discrete-time systems with delays, using noisy and noise-free data, respectively. By comparison, our work addresses \emph{safety} synthesis for discrete-time \emph{nonlinear polynomial} systems with time-invariant delays and \emph{unknown-but-bounded disturbances}.

Scenario-based methodologies~\cite{calafiore2006scenario} address safety problems for unknown systems by leveraging data, often through the use of chance constraints~\cite{esfahani2014performance,margellos2014road}. While these approaches can provide formal probabilistic guarantees, they rely on the assumption that data samples are independent and identically distributed (i.i.d.), which implies that only an input–output pair can be extracted from each trajectory~\cite{calafiore2006scenario}. As a consequence, achieving a desired confidence level typically requires collecting a large number of independent trajectories (see~\cite{nejati2023data} for safety analysis without delays). In contrast, the proposed data-driven framework does not require i.i.d.\ samples and can perform safety analysis solely using a single trajectory collected over a finite time horizon. Moreover, the resulting certificates are deterministic (\emph{i.e.,} confidence $1$) rather than probabilistic, unlike scenario-based methods that provide guarantees with confidence levels between zero and one. These advantages come with inherent trade-offs: scenario-based approaches apply to general Lipschitz-continuous nonlinear dynamics, whereas our framework is tailored to nonlinear systems with polynomial dynamics.

\textbf{Original contributions.} We develop a data-driven framework for safety controller design of discrete-time input-affine polynomial systems with unknown dynamics, accounting for both unknown-but-bounded disturbances and time-invariant delays. Our approach enables the direct synthesis of robust Krasovskii control barrier certificates (RK-CBC) and corresponding robust safety controllers (R-SC) from finite-horizon input–state data. We extend the classical CBC framework, originally developed for delay-free systems, to time-delayed systems by aggregating delay terms within a Krasovskii CBC formulation. This extension enables the establishment of safety guarantees that are robust to both system disturbances and time delays over an infinite time horizon. As highlighted in Remark~\ref{Complex}, where $h$ denotes the time delay, the proposed data-driven RK-CBC conditions reduce the set-product complexity from \((h+1)\)-fold products, as encountered in model-based conditions for delayed systems (\emph{i.e.,}~\eqref{subeq: initial}–\eqref{subeq: decreasing}), to at most a \emph{two-fold} product, significantly reducing the computational burden of safety synthesis. The resulting synthesis problem is formulated as a sum-of-squares (SOS) optimization program, yielding a structured and computationally tractable design procedure. The effectiveness of the proposed approach is demonstrated through three case studies, which establish robust safety guarantees for unknown input-affine polynomial systems subject to bounded disturbances and time-invariant delays. A high-level overview of the proposed framework is illustrated in Fig.~\ref{fig:sys}.

A limited subset of this work was recently presented in \cite{akbarzadeh2024learning}. The data-driven framework developed in the present paper extends~\cite{akbarzadeh2024learning} in two fundamental directions. First, we consider a broader class of systems, namely uncertain discrete-time input-affine polynomial dynamics with 
\emph{time-invariant delays} affecting both the state and the input (cf.~\eqref{eq: dt-NPS} and \eqref{feedback}). Second, to explicitly capture these effects, we introduce a robust Krasovskii control barrier certificate that aggregates delay terms and incorporates them directly into the certificate construction (cf.~\eqref{RK-CBC}). These extensions introduce significant technical challenges in both barrier certificate derivation and safety controller synthesis (cf. the proof of Theorem~\ref{thm: main}). Unlike~\cite{akbarzadeh2024learning}, the proposed framework establishes safety guarantees that are robust to both disturbances and delays over an infinite time horizon.

\textbf{Organization.} The remainder of this paper is organized as follows. Section~\ref{sec:Description} provides the system description, including the underlying notation, preliminaries, and the model of discrete-time input-affine uncertain polynomial systems with time-invariant delay and bounded disturbances. Section~\ref{sec: KCBC} introduces the notion of robust Krasovskii control barrier certificates. In Section~\ref{sec: data scheme}, we present the proposed data-driven procedure for constructing RK-CBC together with their associated R-SC. Section~\ref{sec: Case} illustrates the effectiveness of the proposed framework through three case studies, while Section~\ref{sec: Conclusion} concludes the paper.

\section{System Description}\label{sec:Description}
\subsection{Notation and Preliminaries}
The sets of real numbers, non-negative and positive real numbers are denoted by $\mathbb{R}$, $\mathbb{R}^+_0$, and $\mathbb{R}^+$\!, respectively. Similarly, $\mathbb{N} = \{0, 1, 2, \ldots\}$ and $\mathbb{N}^+ = \{1, 2, \ldots\}$ represent the sets of non-negative and positive integers, respectively. The identity matrix of size $n \times n$ is expressed as $\mathds{I}_n$. Additionally, an $n$-dimensional vector and an {$n\times m$ matrix with all zero entries are represented by $\boldsymbol{0}_n$ and $\boldsymbol{0}_{n\times m}$, respectively.} Given \( N \) scalars \( x_i \in \mathbb{R} \), the \emph{column} vector \( x \) is formed by stacking  as \( x = [x_1; \dots; x_N] \). Moreover, we use $[x_1\;\; \ldots\;\; x_N]$ to represent the horizontal concatenation of vectors $x_i\in\mathbb{R}^n$, forming an $n\times N$ matrix, and $[A_1\;\; \ldots\;\; A_N]$ to denote the horizontal concatenation of matrices $A_i \in \mathbb{R}^{n \times m_i}$, yielding an $n \times \sum_{i=1}^N m_i$ matrix.
For matrices $A_i \in \mathbb{R}^{m_i \times n_i}$, we write $\mathsf{blkdiag}(A_1,\dots,A_N)$ for the block-diagonal matrix with diagonal blocks $A_1,\dots,A_N$ and all off-diagonal blocks equal to zero (of compatible sizes), \emph{i.e.,} $\mathsf{blkdiag}(A_1,\dots,A_N)\in\mathbb{R}^{\left(\sum_{i=1}^N m_i\right)\times\left(\sum_{i=1}^N n_i\right)}.$ A \emph{symmetric} matrix $P$ is positive definite when $P \succ 0$, while $P \succeq 0$ signifies that $P$ is a \emph{symmetric} positive semi-definite matrix. The transpose of a matrix $P$ is written as $P^\top$. The symbol $\otimes$ denotes the Kronecker product; for $A \in \mathbb{R}^{m \times n}$ and $B \in \mathbb{R}^{p \times q}$, the matrix $A \otimes B \in \mathbb{R}^{mp \times nq}$ is formed by replacing each entry $a_{ij}$ of $A$ with the block $a_{ij}B$.
 We use $\Vert \cdot \Vert$ to denote the Euclidean norm for a vector $x \in \mathbb{R}^n$ and the induced 2-norm for a matrix $A \in \mathbb{R}^{n \times m}$. The empty set is denoted by $\emptyset$. In a symmetric matrix, the symbol $(\star)$ denotes the entry given by the transpose of the corresponding symmetric position.
For a system $\Upsilon$ and a property $\Theta$, the notation $\Upsilon \models_{\infty} \Theta$ signifies that $\Upsilon$ satisfies $\Theta$ over an infinite horizon.
The Cartesian product of two sets $\mathcal{A}$ and $\mathcal{B}$ is denoted by $\mathcal{A} \times \mathcal{B}$. For a positive integer $n \in \mathbb{N}^+$, the notation $\mathcal{A}^{n+1}$ represents the $(n+1)$-fold Cartesian product of $\mathcal{A}$ with itself. For the set $\mathcal{V} \subseteq \mathcal{A}$, the expression $\mathcal{A} \backslash \mathcal{V}$ denotes the set of elements in $\mathcal{A}$ not in $\mathcal{V}$.

\subsection{Discrete-Time Input-Affine Time-Delayed Uncertain Polynomial System}
We begin by introducing discrete-time input-affine polynomial systems with time-invariant delays and bounded disturbances, which serve as the primary model for the derivation of robust safety certificates and corresponding robust safety controllers.

\begin{definition}[\textbf{dt-IAUPS-td}]\label{def: dt-NPS}
A discrete-time input-affine uncertain polynomial system with time-invariant
delay (dt-IAUPS-td) is defined by
{\begin{align}\notag
		\Upsilon\!: x(k\!+\!1)  &\!= A_1  \mathcal M (x(k)\!) + A_2  \mathcal M (x(k-h))\\\label{eq: dt-NPS} &~~~  +\! B \mathcal G (x(k),x(k-h))u(k) \!+\! w(k),
	\end{align}}
where 
\begin{itemize}
\item  $h \in \mathbb N^{+}$ is the time-invariant delay;
\item $x(k), x(k-h) \in \mathcal{X}$, with $k \in \mathbb N$, denote the current and delayed states of the system, respectively, while $\mathcal{X} \subseteq \mathbb{R}^n$ is the state space;
	\item $\mathbf{x}_0 = (x(0), \ldots,x(-h)) \in \mathcal{X}^{h+1}$ is the given initial state history;
	{\item $A_1, A_2 \in \mathbb{R}^{n \times M}$ are system matrices, while $B \in \mathbb{R}^{n \times N}$ is the input matrix;}
    \item $\mathcal{M}: \mathcal{X} \to \mathbb{R}^M$, with $\mathcal{M}(\boldsymbol{0}_n) = \boldsymbol{0}_M$ maps the state to a vector of monomials, while $\mathcal{G}: \mathcal{X}^2 \to \mathbb{R}^{N \times m}$ maps the current and delayed states to a matrix of monomials;
	{\item $u(k) \in \mathcal{U} \subseteq \mathbb{R}^m$ is the control input;}
	\item $w(k)$ denotes the disturbance, which is assumed, for all $k$, to lie in
	\begin{equation}\label{eq:Wdelta}
		\mathcal W(\delta) = \big\{\, w \in \mathbb R^n \;\big|\; \|w\|^2 \le \delta \,\big\}, 
		\qquad \delta \in \mathbb R^{+}_{0}.
	\end{equation}		

\end{itemize}
{We represent the dt-IAUPS-td in~\eqref{eq: dt-NPS} by the tuple $\Upsilon = \big(A_1, A_2, B, \mathcal G, \mathcal M, \mathcal X, \mathcal U, \mathcal W, h\big)$. We denote by $\mathbf{x}_{\mathbf{x}_0uw}(k)$}
the state trajectory of the dt-IAUPS-td $\Upsilon$ at time $k \in \mathbb{N}$, representing the state together with its $h$-step history, under input and disturbance {sequences $u(\cdot)$ and $w(\cdot)$,} starting from an initial state history $\mathbf{x}_0 \in \mathcal{X}^{h+1}$.
\end{definition}
We consider a state-feedback controller structured as
\begin{equation}\label{feedback}
	u(k)= \mathcal{F}_1(x(k), x(k-h)) x(k) + \mathcal{F}_2(x(k), x(k-h)) x(k-h),
\end{equation}
where $ \mathcal F_1(x(k), x(k-h)) \in \mathbb R^{ m \times n}$ and $  \mathcal F_2(x(k), x(k-h)) \in \mathbb R^{m \times n}$ are \emph{polynomial} matrices to be designed within the proposed framework. In this work, the system and input matrices $A_1$, $A_2$, and $B$ are all assumed to be unknown. In addition, the disturbance $w$ is unknown but bounded, with a known bound $\delta$ as defined in~\eqref{eq:Wdelta}. While the exact forms of $\mathcal{M}(x)$ and $\mathcal{G}(x)$ are not available, we assume that a suitable dictionary (\emph{i.e.,} a family of basis functions) can be constructed that is sufficiently expressive to capture the true system dynamics. This is achieved by first determining upper bounds on the maximum degrees of $\mathcal{M}(x)$ and $\mathcal{G}(x)$ based on physical insights into the system. These bounds allow the construction of $\mathcal{M}(x)$ and $\mathcal{G}(x)$ so as to include all possible combinations of state variables up to those specified degrees (cf.~the case study in Subsection~\ref{Case_study_1}). With a slight abuse of notation, we use $\mathcal{M}(x)$ and $\mathcal{G}(x)$ interchangeably to denote both the original and the extended dictionaries throughout the paper.

In the following, we formally define the safety specification over an infinite time horizon for the dt-IAUPS-td in~\eqref{eq: dt-NPS}.

\begin{definition}[\textbf{Infinite Robust Safety Property}]\label{safety}
Given a {dt-IAUPS-td  $\Upsilon = \big(A_1, A_2, B, \mathcal G, \mathcal M, \mathcal X, \mathcal U, \mathcal W, h\big)$ as in} Definition~\ref{def: dt-NPS} subject to known time-invariant delays, consider a safety specification $\Theta=\left(\mathcal X_a, \mathcal X_b \right)$, where $\mathcal X_a, \mathcal X_b \subset \mathcal X$ are initial and unsafe sets, respectively, with $\mathcal X_a \cap \mathcal X_b=\emptyset$. The dt-IAUPS-td $\Upsilon$  is said to be  robustly safe against the unknown-but-bounded disturbance over an infinite time horizon, denoted by $\Upsilon \models_{\infty} \Theta$, if any trajectory of $\Upsilon$ initialized from $\mathbf{x}_0 \in \mathcal{X}_a^{h+1}$ avoids entering $\mathcal{X}_b$. Equivalently, for any $k \in \mathbb{N}$, the state history $(x(k),\dots,x(k-h))$ should never lie in $\mathcal{X}_b \times (\mathcal{X}\backslash\mathcal{X}_b)^h$.
\end{definition}
\begin{remark}
Simply excluding the state history from the set $\mathcal X_b^{h+1}$ is not sufficient to ensure safety, as this condition only eliminates trajectories for which all $h+1$ states simultaneously belong to the unsafe region, while still allowing situations in which a subset of the states may be unsafe. Instead, enforcing that the state history never enters the set $\mathcal X_b \times (\mathcal X \setminus \mathcal X_b)^h$ provides a safety guarantee: whenever the preceding $h$ states are safe, the current state is also prevented from entering the unsafe set. Since the initial state history is assumed to lie in $\mathcal X_a^{h+1}$ and $\mathcal X_a \cap \mathcal X_b = \emptyset$, enforcing this condition ensures that the system trajectory remains outside the unsafe region for all time, thereby ensuring safety (cf. Definition~\ref{def: CBC} and Theorem~\ref{thm: model-based}).
\end{remark}
We define the state history at time $k \in \mathbb{N}$ as
\begin{equation}\label{history}
	\mathbf{x}(k) =(x(k),x(k-1),\ldots,x(k-h)) \in \mathcal{X}^{h+1},   
\end{equation}
and the successor sequence is given by
\begin{equation}\label{successor}
	\mathbf{x}(k+1)=(x(k+1),x(k),\ldots,x(k-h+1))\in \mathcal{X}^{h+1}.
\end{equation}
To facilitate the subsequent analysis and improve readability, we define, based on~\eqref{history},
\begin{equation}\label{history-simple}
	\underbrace{\mathbf{x}(k)}_\mathbf{x}=(\underbrace{x(k)}_{x},\underbrace{x(k-1)}_{x_1},\ldots,\underbrace{x(k-h)}_{x_h}),   
\end{equation}
where $x$ represents the current state, $x_1$ the state delayed by one time step, and $x_h$ the state delayed by $h$ time steps.

\section{Robust Krasovskii Control Barrier Certificates }\label{sec: KCBC}
To ensure the safety of the dt-IAUPS-td in the presence of time-invariant delays and bounded disturbances, we introduce the notion of robust Krasovskii control barrier certificates, defined as follows.

\begin{definition}[\textbf{RK-CBC}]\label{def: CBC}
	Consider a {dt-IAUPS-td
	$\Upsilon = \big(A_1, A_2, B, \mathcal G, \mathcal M, \mathcal X, \mathcal U, \mathcal W, h\big)\!,$} with $\mathcal{X}_a \subset \mathcal X$  and $\mathcal{X}_b \subset \mathcal X$ being its initial and unsafe sets, respectively. Assuming the existence of constants $\eta, \beta, \gamma \in \mathbb{R}^{+}$, with $\beta > \eta$, and $\lambda \in (0,1)$, a function $\mathcal B: \mathcal X^{h+1} \to \mathbb{R}_0^+$ is called a robust Krasovskii control barrier certificate (RK-CBC) for $\Upsilon$ if
\begin{subequations}\label{eq: CBC}
		\begin{align}
			&  \:\:  \mathcal B(\mathbf{x}) \leq \eta, \hspace{3cm}  \forall \mathbf{x} \in \mathcal{X}^{h+1}_{a},\label{subeq: initial}\\
			&  \:\:  \mathcal B(\mathbf{x}) \geq  \beta, \hspace{3cm} \forall \mathbf{x} \in \mathcal{X}_b \times (\mathcal{X}\backslash\mathcal{X}_b)^h, \label{subeq: unsafe}
		\end{align}  
and  {$\forall \mathbf{x} \in {\tilde{\mathbb{X}}} = \big \{ \mathbf{x} \in {\mathcal{X}}^{h+1}\!\!: \mathcal{B}(\mathbf{x}) < \beta  \big \}$, {$\exists u \in \mathcal{U}$,}}
 such that $\forall w \in \mathcal{W}(\delta)$
		\begin{align}\label{subeq: decreasing}
	&  \mathcal B(\mathbf{x}(k+1)) -  \lambda\mathcal B(\mathbf{x}(k)) \leq  \gamma \|w\|^2.
\end{align}
with $\gamma$ satisfying
\begin{align}\label{eq: climit}
\gamma\delta \leq   (1-\lambda)\beta.
\end{align}
{Accordingly, $u$ enforcing \eqref{subeq: decreasing} is a robust safety controller (R-SC) for the dt-IAUPS-td.}
\end{subequations}
\end{definition}

The next theorem leverages the RK-CBC in Definition~\ref{def: CBC} and ensures the safety of the dt-IAUPS-td in the presence of time-invariant delays and bounded disturbances.

\begin{theorem}[\textbf{Infinite Robust Safety Guarantee}]\label{thm: model-based}
Given a dt-IAUPS-td, let $\mathcal{B}$ be an RK-CBC for $\Upsilon$ as defined in Definition~\ref{def: CBC}, with $ \gamma \delta$ satisfying \eqref{eq: climit}. Then, for any initial sequence $\mathbf{x}_0 \in \mathcal{X}^{h+1}_a$ and $k \in \mathbb{N}$ under input and disturbance {signals $u(\cdot)$} and $w(\cdot)$, {one has $\mathbf{x}_{\mathbf{x}_0uw}(k) \notin \mathcal{X}_b \times (\mathcal{X}\backslash\mathcal{X}_b)^h$.}
\end{theorem}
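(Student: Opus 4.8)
The plan is to argue by induction on $k$ that the closed-loop trajectory never leaves the open sublevel set $\tilde{\mathbb{X}} = \{\mathbf{x}\in\mathcal{X}^{h+1} : \mathcal{B}(\mathbf{x}) < \beta\}$, and then to invoke the unsafe-set condition~\eqref{subeq: unsafe} to conclude that a point lying in $\tilde{\mathbb{X}}$ cannot belong to $\mathcal{X}_b \times (\mathcal{X}\backslash\mathcal{X}_b)^h$. Throughout, $u(\cdot)$ is taken to be the robust safety controller: at each step $k$ the input $u(k)$ is selected as a value $u\in\mathcal{U}$ whose existence is guaranteed by Definition~\ref{def: CBC} at the current history $\mathbf{x}(k)$; this selection is legitimate precisely as long as $\mathbf{x}(k)\in\tilde{\mathbb{X}}$, which the induction will maintain. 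For the base case, since $\mathbf{x}_0 = \mathbf{x}(0) \in \mathcal{X}_a^{h+1}$, condition~\eqref{subeq: initial} gives $\mathcal{B}(\mathbf{x}(0)) \le \eta$, and because $\beta > \eta$ this yields $\mathcal{B}(\mathbf{x}(0)) < \beta$, i.e., $\mathbf{x}(0) \in \tilde{\mathbb{X}}$.

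For the inductive step, assume $\mathbf{x}(k) \in \tilde{\mathbb{X}}$, that is $\mathcal{B}(\mathbf{x}(k)) < \beta$. Then the R-SC value $u(k)$ is well defined, and for the realized disturbance $w(k) \in \mathcal{W}(\delta)$ condition~\eqref{subeq: decreasing} applies, so that
\[
\mathcal{B}(\mathbf{x}(k+1)) \;\le\; \lambda\, \mathcal{B}(\mathbf{x}(k)) + \gamma \|w(k)\|^2 \;<\; \lambda\beta + \gamma\delta \;\le\; \lambda\beta + (1-\lambda)\beta \;=\; \beta,
\]
where the strict inequality uses $\lambda \in (0,1)$ together with $\mathcal{B}(\mathbf{x}(k)) < \beta$ and $\|w(k)\|^2 \le \delta$, and the last inequality is exactly~\eqref{eq: climit}. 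Hence $\mathbf{x}(k+1) \in \tilde{\mathbb{X}}$, closing the induction, and therefore $\mathcal{B}(\mathbf{x}_{\mathbf{x}_0uw}(k)) < \beta$ for every $k \in \mathbb{N}$. To finish, suppose toward a contradiction that $\mathbf{x}_{\mathbf{x}_0uw}(k) \in \mathcal{X}_b \times (\mathcal{X}\backslash\mathcal{X}_b)^h$ for some $k$; then~\eqref{subeq: unsafe} forces $\mathcal{B}(\mathbf{x}_{\mathbf{x}_0uw}(k)) \ge \beta$, contradicting the bound just established. Thus $\mathbf{x}_{\mathbf{x}_0uw}(k) \notin \mathcal{X}_b \times (\mathcal{X}\backslash\mathcal{X}_b)^h$ for all $k \in \mathbb{N}$, which is the assertion.

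The only genuine subtlety, and the point I would flag as the main obstacle, is that the decrease inequality~\eqref{subeq: decreasing} (and the very existence of the controller value) is guaranteed \emph{only} on $\tilde{\mathbb{X}}$, so the argument is valid only if the trajectory is shown to remain inside $\tilde{\mathbb{X}}$ at every step; this is exactly what the margin condition $\lambda\beta + \gamma\delta \le \beta$ in~\eqref{eq: climit} provides, and one must be careful with the strict-versus-nonstrict inequalities (using $\lambda<1$ and $\|w\|^2\le\delta$) so that the induction truly closes with $\mathcal{B}(\mathbf{x}(k+1))<\beta$ rather than $\le\beta$. The remaining bookkeeping — tracking the history vector $\mathbf{x}(k)$ and its successor $\mathbf{x}(k+1)$ as in~\eqref{history}--\eqref{successor} and noting $\mathcal{B}\ge 0$ by construction — is routine.
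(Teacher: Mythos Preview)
Your proof is correct and follows essentially the same approach as the paper's own proof: an induction showing $\mathcal{B}(\mathbf{x}(k))<\beta$ for all $k$ via the chain $\mathcal{B}(\mathbf{x}(k+1)) \le \lambda\mathcal{B}(\mathbf{x}(k)) + \gamma\|w\|^2 < \lambda\beta + \gamma\delta \le \beta$, followed by an appeal to~\eqref{subeq: unsafe}. Your version is in fact slightly more careful than the paper's, since you explicitly note that the controller existence in~\eqref{subeq: decreasing} is only guaranteed on $\tilde{\mathbb{X}}$ (so the induction must maintain membership there) and you correctly use $\le$ rather than $<$ when invoking~\eqref{eq: climit}.
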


\begin{proof}
By \eqref{subeq: initial}, the initial state satisfies $\mathcal{B}(\mathbf{x}_0 )\le \eta < \beta$. We show that if $\mathcal{B}(\mathbf{x}(k))<\beta$, then $\mathcal{B}(\mathbf{x}(k+1))<\beta$. To this end, one can write the following chain of inequalities:
\begin{align*}
\mathcal{B}(\mathbf{x}(k+1)) &\overset{\eqref{subeq: decreasing}}{\leq} \lambda\mathcal B(\mathbf{x}(k)) + \gamma \|w\|^2 \overset{\eqref{eq:Wdelta}}{<}\lambda\beta + \gamma\delta \\
&\overset{\eqref{eq: climit}}{<}\lambda\beta + (1-\lambda)\beta = \beta.
\end{align*}
Hence, according to \eqref{subeq: unsafe}, $\mathbf{x}(k+1)\notin \mathcal{X}_b \times (\mathcal{X}\backslash\mathcal{X}_b)^h$. Consequently, all system {trajectories $\mathbf{x}_{\mathbf{x}_0uw}(k)$} will remain outside $\mathcal X_b \times (\mathcal X \setminus \mathcal X_b)^h$ over an infinite time horizon, thus completing the proof.
\end{proof} 

While the RK-CBC in Definition~\ref{def: CBC} guarantees robust safety  according to Theorem~\ref{thm: model-based}, its direct synthesis is impractical as the model-dependent terms appearing on the left-hand side of~\eqref{subeq: decreasing} are unknown. In particular, the state {evolution $x(k+1) \!=\! A_1  \mathcal M (x(k)) + A_2  \mathcal M (x(k-h)) + B \mathcal G (x(k),x(k-h)){u}(k) + w(k)$} depends on unknown system and input matrices, as well as unknown disturbances. Motivated by this key challenge, we formally state the problem addressed in this work.

\begin{resp}
\begin{problem}
Consider a dt-IAUPS-td $\Upsilon$ with a time-invariant delay $h$, as defined in Definition~\ref{def: dt-NPS}, in which the {system and input matrices $A_1$, $A_2$, $B$,} the disturbance $w$, and $\mathcal{M}$ and $\mathcal{G}$ are unknown, while upper bounds on the degrees of $\mathcal{M}$ and $\mathcal{G}$ are assumed to be known. Collect input–state data from $\Upsilon$ to synthesize an RK-CBC and an associated R-SC that guarantee infinite-horizon safety (\emph{i.e.,} $\Upsilon \models_{\infty} \Theta$), robust to both time delays and unknown-but-bounded disturbances.
\end{problem}
\end{resp}

\section{Data-Driven Construction of RK-CBC and R-SC}\label{sec: data scheme}
In this section, we present our data-driven procedure to construct an RK-CBC in the form of 
\begin{equation}\label{RK-CBC}
\mathcal{B}(\mathbf{x})=x^\top P x + \kappa \sum_{i=1}^{h} \lambda ^i x^\top_{i} P x_{i},
\end{equation}
where $\lambda,\kappa \in (0,1)$, with $\lambda^{i}$ denoting the $i$-th power of $\lambda$, and
$P \succ 0$ is a positive-definite matrix. The corresponding R-SC for the unknown dt-IAUPS-td is derived alongside this certificate. Unlike delay-free CBC, which depend only on the current state, this formulation explicitly incorporates delayed states through a summation term, allowing their influence to be directly captured in the safety synthesis process. Such functionals are closely related to those used in Lyapunov–Krasovskii theory for the stability analysis of time-delay systems~\cite{Papachristodoulou_time_delay}.

Given the initial state history $\mathbf{x}_0=(x(0), \dots, x(-h)) \in \mathcal{X}_a^{h+1}$, we collect input-state measurements over the horizon $[0,\mathtt{T}]$, where $\mathtt{T} \in \mathbb N^{+}$ is the number of samples:
{\begin{align}\notag
			{\mathit{U}_{-}} &\!=\! [u(0)\;\; u(1)\;\; \dots \;\; u({\mathtt{T}-1})], \\\notag
		{\mathit{X}_{-}} &\!=\! [x(0)\;\; x(1)\;\; \dots \;\; x(\mathtt{T}-1)],   \\ \notag
		{\mathit{X}_{+}} &\!=\! [x(1)\;\; x(2)\;\; \dots \;\; x(\mathtt{T})], 
        \\ \notag
      {\mathit{X}}_h &\!=\! [x(-h)\;\; x(-h+1)\;\; \dots \;\; x(-h+\mathtt{T}-1)], \\\label{eq: datarep1}
        {\mathit{W}_{-}} &\!=\! [w(0)\;\;  w(1)\;\; \dots\;\; w(\mathtt{T}-1)].
	\end{align}}
The state trajectory $\mathit{X}_{+}$ records the evolution of the unknown system dynamics over discrete time steps. In particular, the data collection process begins by applying \emph{arbitrary} input sequences, which is {stored in the matrix $\mathit{U}_{-}$}. Starting from the initial state history $\mathbf{x}_0$ and given the time-invariant delay $h$, we first collect the state sequence $\mathit{X}_{-}$ and then construct the delayed state trajectory $\mathit{X}_{h}$ by appropriately shifting the state data to capture the influence of delayed states.  We emphasize that the disturbance trajectory $\mathit{W}_{-}$ is not directly measured and remains unknown.

Based on the dictionary functions $\mathcal{M}(\cdot)$ and $\mathcal{G}(\cdot)$, we now construct the auxiliary trajectories required for the subsequent analysis as follows:
\begin{align}\notag
{\mathit{M}_{-}} &\!=\!
	\big[\mathcal{M}(x(0)), \mathcal{M}(x(1)),\dots,
	\mathcal{M}(x(\mathtt{T}-1))\big]\!,  \\\notag
{\mathit{M}}_h &\!=\!
	\big[\mathcal{M}(x(-h)), \mathcal{M}(x(-h+1)),\dots,
	\mathcal{M}(x(-h+\mathtt{T}-1))\big]\!,\\\notag 
{{\mathit{G}}} &\!=\!
	{\big[\mathcal{G}(x(0),x(-h)){u}(0), \mathcal{G}(x(1),x(-h+1)){u}(1), \dots,\,}\\\label{eq: datarep2}&
	\quad\quad \quad \quad \quad \quad \quad \quad \mathcal{G}(x(\mathtt{T}-1),x(-h+\mathtt{T}-1)){u}(\mathtt{T}-1)\big]\!.
\end{align}
\begin{remark}\label{quadratic}
	We adopt a quadratic RK-CBC in~\eqref{RK-CBC} to facilitate the reformulation of condition~\eqref{subeq: decreasing} into a tractable matrix inequality (cf.~condition~\eqref{Th:con3}). In addition, we note that data collection is required to take place while the system operates within the safe region. This requirement is motivated by practical considerations, rather than by limitations of the underlying theoretical framework.
\end{remark}
To propose the required data-driven safety conditions for the construction of RK-CBC and its R-SC, we first form a parameterization of dt-IAUPS-td, as detailed in the following lemma.

\begin{lemma}[\textbf{Parameterization of dt-IAUPS-td}]\label{lemma1}
Let $\mathds{L}(x)\in \mathbb{R}^{M \times n}$ be a polynomial transformation matrix fulfilling
\begin{subequations}
\begin{align}\label{transform}
    \mathcal{M}(x)&=\mathds{L}(x)x,
\end{align}
for any $x \in \mathcal{X}$. By designing {the state-feedback controller in \eqref{feedback}, rewritten as}
\begin{align}\label{cont}
		u = \mathcal{F}_1(x,x_h)x + \mathcal{F}_2(x,x_h)x_h,
	\end{align}
	with $\mathcal{F}_1, \mathcal{F}_2$ being {some matrix polynomials of $(x,x_h)$}, the system in~\eqref{eq: dt-NPS} can be equivalently represented as
\begin{equation}\label{lem}
		x^+= \CLbr x+  \DLbr x_h + w,
	\end{equation}
	where $x^+ := x(k+1)$, and 
{\begin{align*}
\mathcal{{C}}_1 &= [A_1\quad B],  \quad \mathcal{R}_1(x,x_h) = \begin{bmatrix} 
	\mathds{L}(x)\\
	\mathcal{G}(x,x_h)\mathcal{F}_1(x,x_h)\end{bmatrix}\!\!,\\
\mathcal{{C}}_2 &= [A_2\quad B],  \quad \mathcal{R}_2(x,x_h) = \begin{bmatrix} 
		\mathds{L}(x_h)\\
		\mathcal{G}(x,x_h)\mathcal{F}_2(x,x_h)\end{bmatrix}\!\!.
\end{align*}}
\end{subequations}
\end{lemma}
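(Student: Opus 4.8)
The plan is to establish~\eqref{lem} by a direct substitution argument, since this representation is an exact algebraic rewriting of~\eqref{eq: dt-NPS} rather than an approximation. First I would record the structural fact behind the hypothesis~\eqref{transform}: because $\mathcal{M}(\boldsymbol{0}_n) = \boldsymbol{0}_M$, every entry of $\mathcal{M}(x)$ is a polynomial with no constant term, so each monomial $x_{i_1}\cdots x_{i_d}$ with $d \ge 1$ can be factored as $(x_{i_1}\cdots x_{i_{d-1}})\,x_{i_d}$; collecting such factorizations columnwise yields a polynomial matrix $\mathds{L}(x) \in \mathbb{R}^{M\times n}$ with $\mathcal{M}(x) = \mathds{L}(x)x$ (the choice of $\mathds{L}$ being non-unique). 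Since this identity holds for every argument in $\mathcal{X}$, it applies simultaneously to the current state $x := x(k)$ and the delayed state $x_h := x(k-h)$, giving $\mathcal{M}(x(k)) = \mathds{L}(x)x$ and $\mathcal{M}(x(k-h)) = \mathds{L}(x_h)x_h$.

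Next I would substitute these two identities, together with the state-feedback law $u = \mathcal{F}_1(x,x_h)x + \mathcal{F}_2(x,x_h)x_h$ from~\eqref{cont}, into~\eqref{eq: dt-NPS}, obtaining
\[
x^+ = A_1\mathds{L}(x)x + A_2\mathds{L}(x_h)x_h + B\mathcal{G}(x,x_h)\big(\mathcal{F}_1(x,x_h)x + \mathcal{F}_2(x,x_h)x_h\big) + w.
\]
Grouping the terms premultiplying $x$ and those premultiplying $x_h$ gives
\[
x^+ = \big(A_1\mathds{L}(x) + B\mathcal{G}(x,x_h)\mathcal{F}_1(x,x_h)\big)x + \big(A_2\mathds{L}(x_h) + B\mathcal{G}(x,x_h)\mathcal{F}_2(x,x_h)\big)x_h + w.
\]
Finally I would rewrite each bracketed coefficient as the product of a horizontally concatenated constant matrix and a vertically stacked polynomial block, namely $A_1\mathds{L}(x) + B\mathcal{G}(x,x_h)\mathcal{F}_1(x,x_h) = [A_1\;\; B]\,[\mathds{L}(x);\, \mathcal{G}(x,x_h)\mathcal{F}_1(x,x_h)] = \CLbr$, and symmetrically $A_2\mathds{L}(x_h) + B\mathcal{G}(x,x_h)\mathcal{F}_2(x,x_h) = [A_2\;\; B]\,[\mathds{L}(x_h);\, \mathcal{G}(x,x_h)\mathcal{F}_2(x,x_h)] = \DLbr$, which is precisely the claimed form~\eqref{lem}.

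The argument is essentially bookkeeping, so I do not expect a genuine obstacle; the only point requiring care is the dimensional consistency of the concatenations — $A_1, A_2 \in \mathbb{R}^{n\times M}$ pair with the $M\times n$ blocks $\mathds{L}(\cdot)$, while $B \in \mathbb{R}^{n\times N}$ pairs with the $N\times n$ blocks $\mathcal{G}(x,x_h)\mathcal{F}_i(x,x_h)$, so that $\mathcal{C}_i \in \mathbb{R}^{n\times(M+N)}$ and $\mathcal{R}_i(x,x_h) \in \mathbb{R}^{(M+N)\times n}$ conform. I would also stress in the write-up that every step above is an identity rather than a relaxation, so that~\eqref{lem} is genuinely \emph{equivalent} to~\eqref{eq: dt-NPS}; this equivalence is exactly what makes the representation usable in the subsequent data-driven reformulation, where the unknown constants $A_1, A_2, B$ are absorbed into $\mathcal{C}_1, \mathcal{C}_2$ and bounded from the collected trajectories.
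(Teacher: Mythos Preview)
Your proposal is correct and follows essentially the same approach as the paper: substitute the factorization $\mathcal{M}(\cdot)=\mathds{L}(\cdot)(\cdot)$ and the feedback law into~\eqref{eq: dt-NPS}, group the terms in $x$ and $x_h$, and rewrite each as a block product $[A_i\;\;B]\,[\mathds{L};\,\mathcal{G}\mathcal{F}_i]$. Your write-up additionally justifies the existence of $\mathds{L}(x)$ and checks dimensional consistency, both of which the paper handles only implicitly or in a later remark.
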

\vspace{0.1cm}
\begin{proof}
{By designing the state feedback \( u\) as in~\eqref{cont}, one has}
{\begin{align*}\notag
&A_1 \mathcal{M}(x) \!+\! A_2 \mathcal{M}(x_h) \!+\! B\mathcal{G}(x,x_h){u}\\\notag &\overset{\eqref{transform}}{=} (A_1\mathds{L}(x)  \!+\! B\mathcal{G}(x,x_h)\mathcal{F}_1(x,x_h))x \!\\\notag & ~~~+\! (A_2\mathds{L}(x_h)  \!+\! B\mathcal{G}(x,x_h)\mathcal{F}_2(x,x_h))x_h \\\notag
&= \underbrace{[A_1\quad B]}_{\mathcal{{C}}_1}\underbrace{\begin{bmatrix} \mathds{L}(x) \\
\mathcal{G}(x,x_h)\mathcal{F}_1(x,x_h)\end{bmatrix}}_{\mathcal{R}_1(x,x_h)}\!x \\
& ~~~+\underbrace{[A_2\quad B]}_{\mathcal{{C}}_2}\underbrace{\begin{bmatrix} \mathds{L}(x_h) \\
	\mathcal{G}(x,x_h)\mathcal{F}_2(x,x_h)\end{bmatrix}}_{\mathcal{R}_2(x,x_h)}\! x_h.
\end{align*}}
Then, we have 
{\begin{align}\notag
x^+ &= A_1 \mathcal{M}(x) \!+\! A_2 \mathcal{M}(x_h) \!+\! B\mathcal{G}(x,x_h){u} \!+\! w\\ & = \CLbr x+  \DLbr x_h + w,
\end{align}}
which concludes the proof.  
\end{proof}
\begin{remark} 
Condition~\eqref{transform} ensures that all subsequent expressions are formulated in terms of the state history \((x,x_1,\cdots, x_h)\), rather than the lifted variables \((\mathcal{M}(x),\mathcal{M}(x_1),\cdots,\mathcal{M}(x_h))\). This consistency is essential, as the proposed RK-CBC in~\eqref{RK-CBC} explicitly depends on \((x,x_1,\cdots, x_h)\). Representing all expressions in this form simplifies the overall formulation and facilitates the derivation of the main results in Theorem~\ref{thm: main}. Moreover, without loss of generality, for any $\mathcal{M}(x)$ there always exists $\mathds{L}(x)$ fulfilling~\eqref{transform} since $\mathcal{M}(\mathbf{0}_n) = \mathbf{0}_M$.
\end{remark}

Building on the results of Lemma~\ref{lemma1}, we propose the following theorem as the central contribution of this work, which enables the design of an RK-CBC and its R-SC using the collected trajectories in~\eqref{eq: datarep1}. 
\begin{figure*}
	\rule{\textwidth}{0.1pt}
\begin{align}\label{T2}
		\widetilde{\mathcal{H}} &=
{\begin{bmatrix}
	\displaystyle\frac{1+\mu_2}{\Delta} P^{-1} 
	& \mathbf{0}_{n\times\varphi} 
	& \mathbf{0}_{n\times\varphi} 
	& \displaystyle-\frac{1}{\Delta} P^{-1} 
	& \mathbf{0}_{n\times\varphi} 
	& \mathbf{0}_{n\times\varphi} 
	& \mathbf{0}_{n\times n} 
	& \mathbf{0}_{n\times n} \\
	\star
	& \mathbf{0}_{\varphi\times\varphi} 
	& \mathbf{0}_{\varphi\times\varphi} 
	& \mathbf{0}_{\varphi\times n} 
	& \mathbf{0}_{\varphi\times\varphi} 
	& \mathbf{0}_{\varphi\times\varphi} 
	& \widetilde{\mathcal{Z}}_1 
	& \mathbf{0}_{\varphi\times n} \\
	\star
	& \star
	& \mathbf{0}_{\varphi\times\varphi} 
	& \mathbf{0}_{\varphi\times n} 
	& \mathbf{0}_{\varphi\times\varphi} 
	& \mathbf{0}_{\varphi\times\varphi} 
	& \mathbf{0}_{\varphi\times n} 
	& \mathbf{0}_{\varphi\times n} \\
	\star
	& \star
	& \star
	& \displaystyle\frac{1+\mu_1}{\Delta} P^{-1} 
	& \mathbf{0}_{n\times\varphi} 
	& \mathbf{0}_{n\times\varphi} 
	& \mathbf{0}_{n\times n} 
	& \mathbf{0}_{n\times n} \\
	\star
	& \star
	& \star
	& \star
	& \mathbf{0}_{\varphi\times\varphi} 
	& \mathbf{0}_{\varphi\times\varphi} 
	& \mathbf{0}_{\varphi\times n} 
	& \mathbf{0}_{\varphi\times n} \\
	\star
	& \star
	& \star
	& \star
	& \star
	& \mathbf{0}_{\varphi\times\varphi} 
	& \mathbf{0}_{\varphi\times n} 
	& \widetilde{\mathcal{Z}}_2 \\
	\star
	& \star
	& \star
	& \star
	& \star
	& \star
	& \phi_1 P^{-1} 
	& \mathbf{0}_{n\times n} \\
	\star
	& \star
	& \star
	& \star
	& \star
	& \star
	& \star
	& \phi_2 P^{-1}
\end{bmatrix}}
\!,\quad 
		\widetilde{\mathcal{S}} =
\begin{bmatrix}
	\mathcal{S}            & \mathbf{0}_{q\times q} & \mathbf{0}_{q\times n} & \mathbf{0}_{q\times n} \\
	\star                  & \mathcal{S}            & \mathbf{0}_{q\times n} & \mathbf{0}_{q\times n} \\
	\star                  & \star                  & \mathbf{0}_{n\times n} & \mathbf{0}_{n\times n} \\
	\star                  & \star                  & \star                  & \mathbf{0}_{n\times n}
\end{bmatrix}
	\end{align}
	\rule{\textwidth}{0.1pt}
\end{figure*}

\begin{theorem}[\textbf{Data-driven RK-CBC and R-SC}]\label{thm: main}
Consider an unknown {dt-IAUPS-td $\Upsilon = \big(A_1, A_2, B, \mathcal G, \mathcal M, \mathcal X, \mathcal U, \allowbreak \mathcal W, h\big)\!$,} with time-invariant delay $h \in \mathbb{N}^+$ and its parameterization in Lemma~\ref{lemma1}. Suppose there exist a constant matrix $P \succ0$, state-dependent {matrices $\widetilde{\mathcal{F}}_1(x,x_h)$, $\widetilde{\mathcal{F}}_2(x,x_h)$, constants} $\mu_1, \mu_2, \eta, \beta \in \mathbb{R}^+$ with $\beta > \eta $, $\Delta = (1+\mu_1)(1+\mu_2)-1$, $\kappa, \lambda \in (0,1)$, and $\alpha: \mathcal{X}^2\!\to\!\mathbb{R}^+$ such that
\begin{subequations}\label{Th: cons}
\begin{align}\label{Th:con1}
	&x^\top P x \leq {\dfrac{\eta}{1+\kappa(\dfrac{\lambda-\lambda^{h+1}}{1-\lambda})}}, \quad\quad\, \forall x  \in \mathcal{X}_a,\\\label{Th:con2}
	&x^\top P x \geq \beta, \quad\quad\quad\quad\quad\quad\quad\quad\quad \forall x  \in \mathcal{X}_b,\\ \label{Th:con3}
	&\widetilde{\mathcal{H}}  + \alpha(x,x_h)\,\widetilde{\mathcal{S}} \succeq 0, \quad\quad\quad\quad\quad\,\, \forall (x, x_{h}) \in {\mathcal{X}}^{2},
\end{align}
\end{subequations}
with $\widetilde{\mathcal{H}}$, $\widetilde{\mathcal{S}}$ as in \eqref{T2}, where $\mathcal{S}$ as in \eqref{matrix_s},
{\begin{align*}
\widetilde{{\mathcal{Z}}}_1 = \begin{bmatrix} 
	\mathds{L}(x) P^{-1}\\
	\mathcal{G}(x,x_h)\widetilde{\mathcal{F}}_1(x,x_h)\end{bmatrix}\!\!,\quad \widetilde{\mathcal{Z}}_2 = \begin{bmatrix} 
	\mathds{L}(x_h) P^{-1}\\
	\mathcal{G}(x,x_h)\widetilde{\mathcal{F}}_2(x,x_h)\end{bmatrix}\!\!,
\end{align*} }
$\varphi = M+N$, $q = n+2(N+M)$, $\phi_1 = \lambda(1-\kappa) $, and $\phi_2 =\kappa \lambda^{h+1}$.
Then, $\mathcal{B}(\mathbf{x})= x^\top P x + \kappa \sum_{i=1}^{h} \lambda^i x^\top_{i} P x_{i}$, is an RK-CBC, {and $u = \mathcal{F}_1(x,x_h) x +\mathcal{F}_2(x,x_h) x_h$ with $\mathcal{F}_1(x,x_h)=\widetilde{\mathcal{F}}_1(x,x_h) P$ and $\mathcal{F}_2(x,x_h)=\widetilde{\mathcal{F}}_2(x,x_h) P$ is its corresponding R-SC.}
\end{theorem}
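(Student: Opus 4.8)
The plan is to verify the three defining conditions of an RK-CBC (Definition~\ref{def: CBC}) for the candidate $\mathcal{B}(\mathbf{x}) = x^\top P x + \kappa\sum_{i=1}^h \lambda^i x_i^\top P x_i$, using the parameterization of Lemma~\ref{lemma1} to eliminate the unknown matrices $A_1, A_2, B$ and replace them by data. For the initial-set condition \eqref{subeq: initial}: given $\mathbf{x}\in\mathcal{X}_a^{h+1}$, every component $x_i$ lies in $\mathcal{X}_a$, so \eqref{Th:con1} gives $x_i^\top P x_i \le \eta/(1+\kappa\frac{\lambda-\lambda^{h+1}}{1-\lambda})$ for each $i=0,\dots,h$. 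Summing with weights $1$ and $\kappa\lambda^i$ and using $\sum_{i=1}^h\lambda^i = \frac{\lambda-\lambda^{h+1}}{1-\lambda}$ collapses the bound exactly to $\eta$. For the unsafe-set condition \eqref{subeq: unsafe}: on $\mathcal{X}_b\times(\mathcal{X}\setminus\mathcal{X}_b)^h$ the leading term $x^\top P x$ already satisfies $x^\top P x\ge\beta$ by \eqref{Th:con2}, and the remaining terms $\kappa\lambda^i x_i^\top P x_i$ are nonnegative, so $\mathcal{B}(\mathbf{x})\ge\beta$. Finally, $\gamma$ will be chosen (from $\mu_1,\mu_2$ and $\delta$) so that \eqref{eq: climit} holds; this amounts to bookkeeping once the decrease constant $\gamma$ emerges from the matrix inequality.

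The substantive part is condition \eqref{subeq: decreasing}. Using Lemma~\ref{lemma1}, write $x^+ = \mathcal{C}_1\mathcal{R}_1(x,x_h)x + \mathcal{C}_2\mathcal{R}_2(x,x_h)x_h + w$ with the controller $u = \mathcal{F}_1 x + \mathcal{F}_2 x_h$, $\mathcal{F}_j = \widetilde{\mathcal{F}}_j P$. Then $\mathcal{B}(\mathbf{x}(k+1)) = (x^+)^\top P x^+ + \kappa\sum_{i=1}^h\lambda^i x_{i-1}^\top P x_{i-1}$ (the history shifts), while $\lambda\mathcal{B}(\mathbf{x}(k)) = \lambda x^\top P x + \kappa\sum_{i=1}^h\lambda^{i+1}x_i^\top P x_i$. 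The telescoping of the delayed sum leaves a term $\lambda(1-\kappa)x^\top P x$ (matching $\phi_1$) from the $i$-index realignment plus a leftover $-\kappa\lambda^{h+1}x_h^\top P x_h$ (matching $\phi_2$), so that $\mathcal{B}(\mathbf{x}(k+1)) - \lambda\mathcal{B}(\mathbf{x}(k)) - \gamma\|w\|^2 \le 0$ reduces to a quadratic form in $(\mathcal{R}_1 x,\, \mathcal{R}_2 x_h,\, x,\, x_h,\, w)$. I would then perform a congruence transformation by $\mathrm{blkdiag}(P^{-1},\dots)$ to move $P$ to $P^{-1}$ and turn the controller products into $\widetilde{\mathcal{F}}_j$, apply a Young/Cauchy–Schwarz-type inequality with parameters $\mu_1,\mu_2$ to split the cross terms coming from $\mathcal{C}_1\mathcal{R}_1 x + \mathcal{C}_2\mathcal{R}_2 x_h$ (this is where $\Delta = (1+\mu_1)(1+\mu_2)-1$ is born), and take a Schur complement to linearize the $(x^+)^\top P x^+$ term — producing the $8\times 8$ block matrix $\widetilde{\mathcal{H}}$. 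The data enter through $\widetilde{\mathcal{Z}}_1,\widetilde{\mathcal{Z}}_2$, which encode $\mathcal{R}_1,\mathcal{R}_2$ after substituting the measured $X_-$, $X_h$, $M_-$, $M_h$, $G$ via the data representation; the unknown $\mathcal{C}_1,\mathcal{C}_2$ (hence the disturbance) get absorbed into an S-procedure term $\alpha(x,x_h)\widetilde{\mathcal{S}}$, where $\widetilde{\mathcal{S}}$ (built from $\mathcal{S}$ in \eqref{matrix_s}) captures the set membership implied by $W_-$ lying in $\mathcal{W}(\delta)$. Condition \eqref{Th:con3}, $\widetilde{\mathcal{H}} + \alpha\widetilde{\mathcal{S}}\succeq 0$ for all $(x,x_h)\in\mathcal{X}^2$, is then exactly the statement that the required quadratic inequality holds for every admissible $(A_1,A_2,B)$ consistent with the data, uniformly over the disturbance — which by the S-procedure yields \eqref{subeq: decreasing}.

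The main obstacle I anticipate is the correct algebraic handling of the delayed summation under the one-step shift: one must carefully track how $\sum_{i=1}^h\lambda^i x_i^\top P x_i$ re-indexes when the whole history advances, isolating precisely the residual coefficients $\phi_1 = \lambda(1-\kappa)$ and $\phi_2 = \kappa\lambda^{h+1}$, and then confirming that these are exactly the multipliers of $P^{-1}$ appearing in the last two diagonal blocks of $\widetilde{\mathcal{H}}$. Coupled with this is the bookkeeping of the congruence transform and the Schur complement so that the controller gains appear only in the combination $\mathcal{G}\widetilde{\mathcal{F}}_j$ (making \eqref{Th:con3} an SOS-solvable condition in the decision variables $P^{-1}, \widetilde{\mathcal{F}}_1, \widetilde{\mathcal{F}}_2, \alpha$), and checking that the resulting $\gamma$ — read off from the $w$-block of $\widetilde{\mathcal{H}}$ and $\mu_1,\mu_2$ — indeed satisfies \eqref{eq: climit}. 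Once the matrix inequality is assembled, recovering \eqref{subeq: decreasing} from \eqref{Th:con3} is a routine S-procedure argument, and the recovery of the controller $u = \mathcal{F}_1 x + \mathcal{F}_2 x_h$ with $\mathcal{F}_j = \widetilde{\mathcal{F}}_j P$ is immediate from the change of variables.
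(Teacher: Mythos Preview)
Your overall strategy matches the paper's, and the treatment of \eqref{Th:con1}--\eqref{Th:con2} and the telescoping that isolates $\phi_1=\lambda(1-\kappa)$ and $\phi_2=\kappa\lambda^{h+1}$ are exactly right. Two mechanical points are off, though, and getting them straight will save you from chasing a nonexistent block structure.

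First, the Cauchy--Schwarz/Young step with $\mu_1,\mu_2$ is \emph{not} applied to the cross term between $\mathcal{C}_1\mathcal{R}_1 x$ and $\mathcal{C}_2\mathcal{R}_2 x_h$; that cross term survives intact into the quadratic form $\Lambda$ in $(x,x_h)$. Young is used on the two disturbance cross terms $2x^\top(\mathcal{C}_1\mathcal{R}_1)^\top\sqrt{P}\,\sqrt{P}w$ and $2x_h^\top(\mathcal{C}_2\mathcal{R}_2)^\top\sqrt{P}\,\sqrt{P}w$, which eliminates $w$ at the outset and yields $\gamma=(1+\tfrac{1}{\mu_1}+\tfrac{1}{\mu_2})\|\sqrt{P}\|^2$ directly. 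There is consequently no ``$w$-block'' in $\widetilde{\mathcal{H}}$; the remaining inequality is purely $\Lambda\preceq 0$ in $(x,x_h)$. The constant $\Delta$ appears only afterwards, when you invert the $2\times 2$ coupling matrix $S=\bigl[\begin{smallmatrix}1+\mu_1 & 1\\ 1 & 1+\mu_2\end{smallmatrix}\bigr]$ in the Schur-complement reformulation $\Lambda\preceq 0 \Leftrightarrow Q^{-1}-ZD^{-1}Z^\top\succeq 0$.

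Second, the data do not enter through $\widetilde{\mathcal{Z}}_1,\widetilde{\mathcal{Z}}_2$: those blocks depend only on the dictionaries $\mathds{L},\mathcal{G}$ and the design variables $P^{-1},\widetilde{\mathcal{F}}_j$. The measured trajectories appear exclusively in $\mathcal{S}$ (hence $\widetilde{\mathcal{S}}$), via the quadratic constraint $\aleph^\top\mathcal{S}\,\aleph\preceq 0$ on the unknowns $(\mathcal{C}_1,\mathcal{C}_2)$ obtained from $W_-W_-^\top\preceq \mathtt{T}\delta\,\mathds{I}_n$ and \eqref{new35}. The S-procedure then combines this data-based constraint with the (unknown-dependent) inequality $\xi^\top\mathcal{H}\,\xi\succeq 0$, and a final Schur complement removes the bilinearity in $\mathcal{R}_jP^{-1}\mathcal{R}_j^\top$, producing \eqref{Th:con3}. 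With these two corrections your plan coincides with the paper's proof.
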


\begin{proof}
We begin by demonstrating that the conditions in \eqref{Th:con1} and \eqref{Th:con2} imply the satisfaction of \eqref{subeq: initial} and \eqref{subeq: unsafe}, respectively. By selecting an RK-CBC of the form $\mathcal{B}(\mathbf{x})= x^\top P x + \kappa \sum_{i=1}^{h} \lambda^i x^\top_{i} P x_{i}$, one has
\begin{align*}
\sup_{\mathbf{x} \in \mathcal{X}_a^{h+1}} \mathcal{B}(\mathbf{x}) 
&\stackrel{\kappa, \lambda > 0}{=} \sup_{x \in \mathcal{X}_a} x^\top P x + \kappa \sum_{i=1}^{h} \lambda^i \sup_{x_i \in \mathcal{X}_a} x_i^\top P x_i \\
&~~~= \big(1 + \kappa \sum_{i=1}^{h} \lambda^i \big) \sup_{x \in \mathcal{X}_a} x^\top P x \\
&~~~= {\big(1 + \kappa ( \frac{\lambda - \lambda^{h+1}}{1 - \lambda}) \big) \sup_{x \in \mathcal{X}_a} x^\top P x.}
\end{align*}
Hence, condition \eqref{Th:con1} is equivalent to \eqref{subeq: initial}. Furthermore, we have
\begin{align*}
\inf_{\mathbf{x} \in \mathcal{X}_b \times (\mathcal{X} \backslash \mathcal{X}_b)^h} \!\mathcal{B}(\mathbf{x}) 
&\stackrel{\kappa, \lambda > 0}{=} \!\!\inf_{x \in \mathcal{X}_b} x^\top P x \!+\! \kappa \sum_{i=1}^{h} \! \lambda^i \!\!\! \inf_{x_i \in (\mathcal{X} \backslash \mathcal{X}_b)} \!\! x_i^\top P x_i \\
&~~\stackrel{P \succ 0}{\geq} \inf_{x \in \mathcal{X}_b} x^\top P x.
\end{align*}
Thus, satisfying condition \eqref{Th:con2} ensures that \eqref{subeq: unsafe} is also fulfilled.
We now demonstrate that condition \eqref{Th:con3} fulfills \eqref{subeq: decreasing}, as well. To do so, we have
\begin{align}\notag
\mathcal B(\mathbf x(k+1))
&= x^\top(k+1) P x(k+1)\\\notag &~~~+ \kappa \sum_{i=1}^{h} \lambda^i x^\top(k-i+1) P x(k-i+1)\\\notag
&= x^\top(k+1) P x(k+1) + \kappa\lambda x^\top(k) P x(k)\\\label{telescopic3} &~~~
+ \kappa\sum_{i=2}^{h} \lambda^{i} x^\top(k-i+1) P x(k-i+1) .
\end{align}
Given $\mathbf{x}(k)$ in \eqref{history}, by subtracting $\lambda \mathcal{B}(\mathbf{x}(k))$ from both sides of~\eqref{telescopic3} and re-indexing the summation over delayed states, we have
\begin{align}\notag
	&\mathcal B(\mathbf{x}(k+1)) - \lambda\mathcal B(\mathbf{x}(k))
	\\\notag & = x^\top(k+1) P x(k+1) + \kappa \lambda x^\top(k) P x(k)\\\notag &  + \kappa \sum_{i=1}^{h-1} \lambda^{i+1} x^\top(k-i) P x(k-i)
	\\\notag &~~~\overbrace{- \lambda x^\top(k) P x(k) - \kappa \sum_{i=1}^{h} \lambda^{i+1} x^\top(k-i) P x(k-i)}^{\lambda\mathcal B(\mathbf{x}(k))}.
\end{align}
By cancellation of the overlapping terms in the sums over \( P\), we simplify the series and obtain
\vspace{-1mm}
\begin{align}\notag
	&\mathcal B(\mathbf{x}(k+1)) - \lambda\mathcal B(\mathbf{x}(k))\\\notag
	&= x^{\top}(k+1)  P x(k+1)  -\lambda x^{\top}(k) P x(k)+\kappa\lambda x^{\top}(k) P x(k) \\\notag &~~~ -\kappa \lambda^{h+1} x^{\top}(k-h) P x(k-h)\\\notag
		&= x^{\top}(k+1) P x(k+1) +\lambda(\kappa-1) x^{\top}(k) P x(k)\\\label{Telescopic} &~~~-\kappa \lambda^{h+1} x^{\top}(k-h) P x(k-h).
\end{align}
From \eqref{history-simple}, we set \( x(k-h) = x_h \) as the state delayed by \( h \) steps and \( x(k) = x \) as the current state. Substituting the system dynamics in~\eqref{eq: dt-NPS} into the expression in~\eqref{Telescopic} yields
\vspace{-1mm}
\begin{align}\notag
	&\mathcal B(\mathbf{x}^+ ) - \lambda\mathcal B(\mathbf{x}) \\\notag
	&= {\big(A_1\mathcal{M}(x) + A_2\mathcal{M}(x_h) 
	+ B\mathcal{G}(x,x_h){u}+ w\big)^\top} \\\notag
	&~~~ {P\big(A_1\mathcal{M}(x) + A_2\mathcal{M}(x_h) 
	+ B\mathcal{G}(x,x_h){u}+ w\big)} \\\notag &~~~ +\lambda(\kappa-1) x^{\top} P x-\kappa \lambda^{h+1} x_h^{\top} P x_h  \\\notag
	&\overset{\eqref{lem}}{=}
	\big(\CLbr x+  \DLbr x_h + w\big)^\top \\\notag
	&~~~ P\big(\CLbr x+  \DLbr x_h + w\big) \\\notag &~~~ +\lambda(\kappa-1) x^{\top} P x -\kappa \lambda^{h+1} x_h^{\top} P x_h\\\notag
	&= x^\top(\CLbr)^\top P(\CLbr)x \\\notag
	&\quad + x_h^\top\!\big(\DLbr \big)^\top
	P\big(\DLbr \big)x_h \\\notag
	&\quad + 2\,x^\top(\CLbr)^\top P\big(\DLbr \big)x_h  \\\notag
	&\quad + 2\,\underbrace{x^\top(\CLbr)^\top \sqrt{P}}_{a}\,
	\underbrace{\sqrt{P}\,w}_{b}  \\\notag
	&\quad + 2\,\underbrace{x_h^\top\!\big(\DLbr \big)^\top \sqrt{P}}_{c} \underbrace{\sqrt{P} w}_{d}
	+ w^\top P w \\\notag &~~~+\lambda(\kappa-1) x^{\top} P x-\kappa \lambda^{h+1} x_h^{\top} P x_h.
\end{align}
\begin{figure*}
	\rule{\textwidth}{0.1pt}
\begin{align}\label{final}
	{\Gamma = \begin{bmatrix}
			\dfrac{1+\mu_2}{\Delta}\,P^{-1} - \dfrac{1}{\lambda(1-\kappa)}JP^{-1} J^\top & -\,\dfrac{1}{\Delta}P^{-1} \\[8pt]
			\star& \dfrac{1+\mu_1}{\Delta}\,P^{-1} - \dfrac{1}{\kappa \lambda^{h+1} }Y P^{-1} Y^\top
		\end{bmatrix}}
	\end{align}
\rule{\textwidth}{0.1pt}
\end{figure*}
According to the Cauchy-Schwarz inequality~\cite{bhatia1995cauchy}, \emph{i.e.,} $a b \leq \|a\|\|b\|$, for any $a^{\top}, b \in \mathbb{R}^n$, followed by employing Young's inequality~\cite{young1912classes}, \emph{i.e.,} $\|a\|\|b\| \leq \frac{\mu_1}{2}\|a\|^2+\frac{1}{2 \mu_1}\|b\|^2$, for any $\mu_1 \in \mathbb{R}^{+}$,
similarly $\|c\|\|d\| \leq \frac{\mu_2}{2}\|c\|^2+\frac{1}{2 \mu_2}\|d\|^2$ for any $\mu_2 \in \mathbb{R}^{+}$, one has
\begin{align}\notag
	&\mathcal B(\mathbf{x}^+ ) - \lambda\mathcal B(\mathbf{x}) \\\notag &\leq x^\top\big(\CLbr \big)^{\top} P  \big(\CLbr\big)x \\\notag
&~~~ + x_h^\top\!\big(\DLbr \big)^\top
P\big(\DLbr \big)x_h \\\notag
&~~~ + \mu_1 x^\top\big(\CLbr \big)^{\top} P  \big(\CLbr\big)x \\\notag
&~~~+\mu_2 x_h^\top\!\big(\DLbr \big)^\top
P\big(\DLbr \big)x_h  \\\notag
&~~~+ 2\,x^\top(\CLbr)^\top P\big(\DLbr \big)x_h  \\\notag
& ~~~ + \frac{1}{\mu_2}\Vert\sqrt{P}\Vert^2\Vert w \Vert^2+ \frac{1}{\mu_1}\Vert\sqrt{P}\Vert^2\Vert w \Vert^2 +  \Vert\sqrt{P}\Vert^2\Vert w \Vert^2\\\notag
&~~~ +\lambda(\kappa-1) x^{\top} P x-\kappa \lambda^{h+1} x_h^{\top} P x_h.
\end{align}
Expanding the quadratic forms, collecting terms in  $x$ and  $x_{h}$, we obtain 
\begin{align}\notag
&\mathcal B(\mathbf{x}^+ ) - \lambda\mathcal B(\mathbf{x}) \\\notag &\leq
 x^\top\big(\big(1 \!+\! \mu_1)(\CLbr \big)^{\top} \!\! P  \big(\CLbr\big)\\\notag &~~~+\! \lambda(\kappa-1)P \big)x \!+ 2\,x^\top(\CLbr)^\top \! P\big(\DLbr \big)x_h  \\\notag
&~~~+x_h^\top\!\big((1 \!+\! \mu_2) \big(\DLbr \big)^\top
\!\! P\big(\DLbr \big) \\\label{new87}
& ~~~ - \kappa \lambda^{h+1}P \big)x_h  + \underbrace{(1 \!+\! \frac{1}{\mu_1} \!+\! \frac{1}{\mu_2})\Vert\sqrt{P}\Vert^2}_\gamma\Vert w \Vert^2.
\end{align}
Then, one can reformulate \eqref{new87} as 
\begin{align}\notag
\mathcal B(\mathbf{x}^+ ) - \lambda \mathcal B(\mathbf{x})  \le
	\begin{bmatrix} x \\ x_h \end{bmatrix}^{\!\top}
	\underbrace{\begin{bmatrix} \Lambda_{11} & \Lambda_{12} \\ \star & \Lambda_{22} \end{bmatrix}}_{\Lambda}
	\begin{bmatrix} x \\ x_h \end{bmatrix}
	+ \gamma  \|w\|^{2},
\end{align}
with
\begin{align*}
		\Lambda_{11} &= (1 \!+\! \mu_1)\big(\CLbr \big)^{\top}\! P  \big(\CLbr\big) \!+\! \lambda(\kappa-1) P \!\\
		\Lambda_{12} &= (\CLbr)^\top P\big(\DLbr \big), \\
		\Lambda_{21} &= \Lambda^\top_{12},\\
		\Lambda_{22} &= (1+\mu_2) \big(\DLbr \big)^\top
		P\big(\DLbr \big) - \kappa \lambda^{h+1}  P.
	\end{align*}
It is clear that if $\Lambda  \preceq 0$, then one can conclude that $\mathcal B(\mathbf{x}^+ ) - \lambda\mathcal B(\mathbf{x}) \leq \gamma \|w\|^{2}$ with $\gamma = (1 \!+\! \frac{1}{\mu_1} \!+\! \frac{1}{\mu_2})\Vert\sqrt{P}\Vert^2$. To streamline the argument, we set
\begin{align*}
J &= \CLbr,\quad 
	Y= \DLbr, \\[2pt]
	Z &= \mathsf{blkdiag}(J,	Y), \quad
 Q = \begin{bmatrix}(1+\mu_1)P & P \\ P & (1+\mu_2)P\end{bmatrix}\!\!, \\[2pt]
	D &= \mathsf{blkdiag}(\lambda(1-\kappa)P, \kappa \lambda^{h+1} P),
\end{align*}
which implies $\Lambda =  Z^\top Q 	 Z - D$. Since $P \succ 0$, it is clear that $D \succ 0$. Additionally, $	Q = 	S \otimes P$ with 
\begin{align*}
		S &= \begin{bmatrix} 1+\mu_1 & 1 \\ 1 & 1+\mu_2 \end{bmatrix}\!\!.
\end{align*}
Accordingly, since $\mu_1,\mu_2>0$, it is clear that $S\succ 0$, and therefore $S^{-1}$ is given by
\begin{align*}
	S^{-1} &= \tfrac{1}{\Delta}\begin{bmatrix}1+\mu_2 & -1 \\ -1 & 1+\mu_1\end{bmatrix}\!\!,
\end{align*}
with $\Delta = (1+\mu_1)(1+\mu_2)-1$. Consequently, $Q\succ 0$ and is therefore invertible, with its inverse given by
\begin{align*}
		Q^{-1} &= S^{-1} \otimes P^{-1} = \tfrac{1}{\Delta}\begin{bmatrix}
		(1+\mu_2)P^{-1} & -P^{-1} \\
		-\,P^{-1} & (1+\mu_1)P^{-1}
	\end{bmatrix}\!\!.
\end{align*}
By utilizing Schur complement \cite{zhang2006schur}, the following equivalences hold:
\begin{align*}
	\Lambda \preceq 0
	&\;\;\Longleftrightarrow\;\; 	D -  Z^\top Q Z \succeq 0 \\[2pt]
	&\;\;\Longleftrightarrow\;\; 
	\begin{bmatrix}
D & 	Z^\top \\
	 	Z & 	Q^{-1}
	\end{bmatrix} \succeq 0 \\[2pt]
	&\;\;\Longleftrightarrow\;\; 	\Gamma = Q^{-1} - Z D^{-1} Z^\top \succeq 0,
\end{align*}
where 
\begin{align*}
Z 	D^{-1} Z^\top &= 
\begin{bmatrix}
	\dfrac{1}{\lambda(1-\kappa)} J P^{-1}J^\top &  \mathbf{0}_{n \times n} \\
	\mathbf{0}_{n \times n}  & \dfrac{1}{\kappa \lambda^{h+1} } Y P^{-1}Y^\top
\end{bmatrix}\!\!.
\end{align*}
Hence, $\Gamma = Q^{-1} - Z 	D^{-1} 	Z^\top$ can be constructed as in~\eqref{final}. Accordingly, $\Gamma\succeq 0$ in~\eqref{final} can be rewritten in the quadratic matrix form as
\begin{align} \label{eq: Gamma}
\Gamma = \xi^\top \mathcal{H} \,\xi \succeq 0,
\end{align}
with
\begin{align}\notag
  \xi&=\begin{bmatrix}
			\mathds{I}_n & \mathbf{0}_{n \times n} \\
			\mathbf{0}_{n \times n} & \mathds{I}_n  \\
		\mathcal{{C}}^\top_1  & \mathbf{0}_{\varphi  \times n} \\
			\mathbf{0}_{\varphi  \times n} & \mathcal{{C}}^\top_2 
		\end{bmatrix}\!\!,
		\\\notag      
	\mathcal{H}&=\begin{bmatrix}
			\dfrac{1+\mu_2}{\Delta} P^{-1} & -\dfrac{1}{\Delta} P^{-1} & \mathbf{0}_{n \times \varphi}& \mathbf{0}_{n \times \varphi} \\
			\star & \dfrac{1+\mu_1}{\Delta} P^{-1} & \mathbf{0}_{n \times \varphi}&\mathbf{0}_{n \times \varphi}\\
			\star & 	\star  & \mathcal{Z}_1 & \mathbf{0}_{\varphi  \times \varphi}\\
		    	\star  & 	\star  & 	\star  & \mathcal{Z}_2 
		\end{bmatrix}\!\!,
        \\\notag
        \mathcal{Z}_1 &= -	\dfrac{1}{\lambda(1-\kappa)}  \mathcal{R}_1(x,x_h)P^{-1}\mathcal{R}^\top_1(x,x_h),
        \\\notag
        \mathcal{Z}_2 &= 	- \dfrac{1}{\kappa \lambda^{h+1} } \mathcal{R}_2(x,x_h) P^{-1} \mathcal{R}^\top_2(x,x_h),
\end{align}
where $\varphi = M+N$.
\begin{figure*}
	\rule{\textwidth}{0.1pt}
	{\begin{align}\label{T1}
		K^\top \mathcal{H} K =
		\begin{bmatrix}
			\displaystyle\frac{1+\mu_2}{\Delta} P^{-1} 
			& \mathbf{0}_{n\times\varphi} 
			& \mathbf{0}_{n\times\varphi} 
			& \displaystyle-\frac{1}{\Delta} P^{-1} 
			& \mathbf{0}_{n\times\varphi} 
			& \mathbf{0}_{n\times\varphi} \\[6pt]
			\star
			& \mathcal{Z}_1 
			& \mathbf{0}_{\varphi\times\varphi} 
			& \mathbf{0}_{\varphi\times n} 
			& \mathbf{0}_{\varphi\times\varphi} 
			& \mathbf{0}_{\varphi\times\varphi} \\
			\star
			& \star
			& \mathbf{0}_{\varphi\times\varphi} 
			& \mathbf{0}_{\varphi\times n} 
			& \mathbf{0}_{\varphi\times\varphi} 
			& \mathbf{0}_{\varphi\times\varphi} \\[3pt]
			\star
			& \star
			& \star
			& \displaystyle\frac{1+\mu_1}{\Delta} P^{-1} 
			& \mathbf{0}_{n\times\varphi} 
			& \mathbf{0}_{n\times\varphi} \\[6pt]
			\star
			& \star
			& \star
			& \star
			& \mathbf{0}_{\varphi\times\varphi} 
			& \mathbf{0}_{\varphi\times\varphi} \\
			\star
			& \star
			& \star
			& \star
			& \star
			& \mathcal{Z}_2
		\end{bmatrix}
	\end{align}}
	\rule{\textwidth}{0.1pt}
\end{figure*}

The main challenge in satisfying \eqref{eq: Gamma} is that the matrices $\mathcal{{C}}_1$ and $\mathcal{{C}}_2$ are unknown. However, by utilizing the collected data sets in~\eqref{eq: datarep1} and \eqref{eq: datarep2}, one has
\begin{align}\label{new35}
\mathit{X}_{+} = \mathcal{{C}}_1 \Xi_1 + \mathcal{{C}}_2 \Xi_2 + \mathit{W}_{-},
	\end{align}
with
{\begin{align*}
	\Xi_1 = \begin{bmatrix}\mathit{M_{-}} \\
		{\dfrac{\mathit{G}}{2}}  \end{bmatrix}\!\!,\, \Xi_2 = \begin{bmatrix}\mathit{M}_h \\
			{\dfrac{\mathit{G}}{2}} \end{bmatrix}\!\!.
\end{align*}}
	On the other hand, by applying Schur complement \cite{zhang2006schur} to  \eqref{eq:Wdelta}, one can verify that
	\begin{align*}
		w w^\top \preceq \delta \mathds{I}_n, \quad \forall w \in \mathcal{W}(\delta).
	\end{align*}
	Since
	\begin{align*}
		&	\mathit{W}_{-} \mathit{W}^\top_{-}= \sum_{k=0}^{\mathtt{T}-1} w(k) w^\top(k),
	\end{align*}
	we have
\begin{align}\notag
	\mathtt{T}\delta \mathds{I}_n \succeq &~	\mathit{W_{-}} 	\mathit{W^\top_{-}}\\\notag
	\overset{\eqref{new35}}{=}  & (	\mathit{X_{+}} - \mathcal{{C}}_1 \Xi_1  - \mathcal{{C}}_2 \Xi_2  ) (	\mathit{X_{+}} - \mathcal{{C}}_1 \Xi_1  - \mathcal{{C}}_2 \Xi_2)^\top \\\notag
	=  & \mathit{X_{+}} \mathit{X^\top_{+}}  - \mathcal{{C}}_1 \Xi_1\mathit{X^\top_{+}}  - \mathcal{{C}}_2 \Xi_2 \mathit{X^\top_{+}} - \mathit{X_{+}}  \Xi^\top_1 \mathcal{{C}}^\top_1\\\notag &~~~ - \mathit{X_{+}}  \Xi^\top_2 \mathcal{{C}}^\top_2  + \mathcal{{C}}_1 \Xi_1 \Xi^\top_1 \mathcal{{C}}^\top_1 + \mathcal{{C}}_2 \Xi_2\Xi^\top_2 \mathcal{{C}}^\top_2 \\\label{eq: assum_t} &~~~ + \mathcal{{C}}_1 \Xi_1  \Xi^\top_2 \mathcal{{C}}^\top_2 + \mathcal{{C}}_2  \Xi_2 \Xi^\top_1 \mathcal{{C}}^\top_1.
\end{align}
By defining
\begin{align*}
	\Psi = & \mathit{X_{+}} \mathit{X^\top_{+}}  - \mathcal{{C}}_1 \Xi_1\mathit{X^\top_{+}}  - \mathcal{{C}}_2 \Xi_2 \mathit{X^\top_{+}} - \mathit{X_{+}}  \Xi^\top_1 \mathcal{{C}}^\top_1\\\notag &~~~ - \mathit{X_{+}}  \Xi^\top_2 \mathcal{{C}}^\top_2  + \mathcal{{C}}_1 \Xi_1 \Xi^\top_1 \mathcal{{C}}^\top_1 + \mathcal{{C}}_2 \Xi_2\Xi^\top_2 \mathcal{{C}}^\top_2 \\&~~~ + \mathcal{{C}}_1 \Xi_1  \Xi^\top_2 \mathcal{{C}}^\top_2 + \mathcal{{C}}_2  \Xi_2 \Xi^\top_1 \mathcal{{C}}^\top_1 - \mathtt{T}\delta \mathds{I}_n,
\end{align*}
inequality \eqref{eq: assum_t} can be rewritten as a quadratic matrix constraint as
{\begin{align*}
	\Psi =\aleph^\top \mathcal{S} \,\aleph \preceq 0,
\end{align*}}
with 
	\begin{align}\label{matrix_s}
		\mathcal{S}&=\begin{bmatrix}
				X_{+} X_{+}^{\top}-\mathtt{T} \delta I_n & - X_{+} \Xi_1^{\top} & -X_{+} \Xi_2^{\top} \\
			- \Xi_1 X_{+}^{\top} & \Xi_1 \Xi_1^{\top} &  \Xi_1 \Xi_2^{\top} \\
			- \Xi_2 X_{+}^{\top} & \Xi_2 \Xi_1^{\top} & \Xi_2 \Xi_2^{\top}
		\end{bmatrix}\!\!,~~
	\aleph=\begin{bmatrix}
			\mathds{I}_n \\
			\mathcal{C}_1^{\top} \\
			\mathcal{C}_2^{\top}
	\end{bmatrix}\!\!.
\end{align}
Then, one can conclude that
{\begin{align} \notag
&\begin{bmatrix}
	\Psi & \mathbf{0}_{n \times n} \\
	\mathbf{0}_{n \times n} & 	\Psi
	\end{bmatrix} \\ & = \begin{bmatrix}
\aleph & \mathbf{0}_{q \times n}  \\
	\mathbf{0}_{q \times n}  &\aleph
	\end{bmatrix}^\top \begin{bmatrix}
\mathcal{S} & \mathbf{0}_{q \times q}  \\
	\mathbf{0}_{q \times q}  & \mathcal{S}
	\end{bmatrix} \begin{bmatrix}
	\aleph & \mathbf{0}_{q \times n} \\
	\mathbf{0}_{q \times n}  &\aleph
	\end{bmatrix} \preceq 0, \label{eq: assum_t double}
\end{align} }
with $q = n+2(N+M)$. On  the other side, the relation in \eqref{eq: Gamma} can be reformulated as
\begin{align} \label{eq: Gamma1}
	\Gamma =  \overbrace{\begin{bmatrix}
		\aleph & \mathbf{0}_{q \times n} \\
		\mathbf{0}_{q \times n} &\aleph
	\end{bmatrix}^\top\!\!\!\! K^\top}^{\xi^\top } \mathcal{H} \overbrace{K  \begin{bmatrix}
	\aleph & \mathbf{0}_{q \times n} \\
	\mathbf{0}_{q \times n} &\aleph
	\end{bmatrix}}^{\xi}\succeq 0,
\end{align}
with $K = [K_1\,\, K_2]$, and
\begin{align*}
K_1 &\!=\!
\begin{bmatrix}
	\mathds{I}_n                & \mathbf{0}_{n \times \varphi }           & \mathbf{0}_{n\times \varphi } \\
	\mathbf{0}_{n\times n}              & \mathbf{0}_{n\times \varphi }           & \mathbf{0}_{n\times \varphi } \\
	\mathbf{0}_{\varphi \times n}          & \mathds{I}_{\varphi }            & \mathbf{0}_{\varphi \times \varphi } \\
	\mathbf{0}_{\varphi \times n}          & \mathbf{0}_{\varphi \times \varphi }       & \mathbf{0}_{\varphi \times \varphi }
\end{bmatrix}\!\!,
K_2 \!=\!
\begin{bmatrix}
	\mathbf{0}_{n\times n}              & \mathbf{0}_{n\times  \varphi}           & \mathbf{0}_{n\times  \varphi} \\
	\mathds{I}_n               & \mathbf{0}_{n\times  \varphi}           & \mathbf{0}_{n\times  \varphi} \\
	\mathbf{0}_{ \varphi\times n}          & \mathbf{0}_{ \varphi \times  \varphi}       & \mathbf{0}_{ \varphi \times  \varphi} \\
	\mathbf{0}_{ \varphi\times n}          & \mathbf{0}_{ \varphi \times  \varphi}       & \mathds{I}_{ \varphi}
\end{bmatrix}\!\!.
\end{align*}
Therefore, according to S-procedure \cite{caverly2019lmi}, in order to satisfy $\Gamma \succeq 0 $ in~\eqref{eq: Gamma1} conditioned on the data conformity in \eqref{eq: assum_t double}, it is sufficient that there exists an $\alpha: \mathcal{X}^2\!\to\!\mathbb{R}^+$ such that
{\begin{align*}
&\Gamma + \alpha(x,x_h) \begin{bmatrix}
	\Psi & \mathbf{0}_{n \times n} \\
	\mathbf{0}_{n \times n} & 	\Psi
\end{bmatrix}   \succeq 0\\ & \Longleftrightarrow \begin{bmatrix}
\aleph & \mathbf{0}_{q \times n}   \\
\mathbf{0}_{q \times n}  &\aleph
\end{bmatrix}^\top\!\!\!\Big(K^\top \mathcal{H} K  + \alpha(x,x_h) \begin{bmatrix}
\mathcal{S} & \mathbf{0}_{q \times q} \\
\mathbf{0}_{q \times q} & \mathcal{S}
\end{bmatrix} \Big)\\ & ~~~~~~\begin{bmatrix}
\aleph & \mathbf{0}_{q \times n}   \\
\mathbf{0}_{q \times n}   &\aleph
\end{bmatrix} \succeq 0,
\end{align*}}
where $K^\top \mathcal{H} K$ is constructed as in~\eqref{T1}.

The only remaining challenge is the bilinearity arising from the products {$\mathcal{F}_1(x,x_h)$ and $\mathcal{F}_2\left(x,x_h\right)$} with $P^{-1}$ in the blocks $\mathcal{Z}_1$ and $\mathcal{Z}_2$ of $\mathcal{H}$. To address this, we apply the Schur complement \cite{zhang2006schur} which yields an equivalent formulation where the decision variables {$P^{-1}, \widetilde{\mathcal{F}}_1(x,x_h), \widetilde{\mathcal{F}}_2(x,x_h)$,}  and $ \alpha(x, x_h)$ appear linearly as the following
\begin{align*}
K^\top\! \mathcal{H} K \!+\! \alpha(x,x_h)\! \begin{bmatrix}
	\mathcal{S} \!\!&\!\! \mathbf{0}_{q \times q} \\
	\mathbf{0}_{q \times q} \!\!&\!\! \mathcal{S}
\end{bmatrix}  \! \succeq \! 0  \Longleftrightarrow \underbrace{\widetilde{\mathcal{H}}  \!+\! \alpha(x,x_h)\,\widetilde{\mathcal{S}} \succeq 0}_{\textbf{\text{Condition}~\eqref{Th:con3}}},
\end{align*}
with
{\begin{align*}
\widetilde{{\mathcal{Z}}}_1 = \begin{bmatrix} 
	\mathds{L}(x) P^{-1}\\
	\mathcal{G}(x,x_h)\widetilde{\mathcal{F}}_1(x,x_h)\end{bmatrix}\!,\quad \widetilde{\mathcal{Z}}_2 = \begin{bmatrix} 
	\mathds{L}(x_h) P^{-1}\\
	\mathcal{G}(x,x_h) \widetilde{\mathcal{F}}_2(x,x_h)\end{bmatrix}\!, 
\end{align*}}
	 $\phi_1 = \lambda(1-\kappa) $, $\phi_2 =\kappa \lambda^{h+1}$,  $\widetilde{\mathcal{H}}$ and $\widetilde{\mathcal{S}}$ in~\eqref{T2}.
{Hence, $\mathcal{B}(\mathbf{x})=x^\top P x + \kappa \sum_{i=1}^{h}\lambda^i x^\top_{i} P x_{i}$ is an RK-CBC and $u = \mathcal{F}_1(x,x_h) x + \mathcal{F}_2(x,x_h) x_h$ with $\mathcal{F}_1(x,x_h)=\widetilde{\mathcal{F}}_1(x,x_h) P$ and $\mathcal{F}_2(x,x_h)=\widetilde{\mathcal{F}}_2(x,x_h) P$ is its associated R-SC, thereby concluding the proof. }
\end{proof}

\begin{remark}	
One can consider \( P^{-1} = \Omega \) in \eqref{Th:con3} and solve for \( \Omega \) so that it is symmetric and positive definite. Once \(  \Omega \) is determined, its inverse will provide the matrix \( P \). Additionally, since \( \lambda \), \( \kappa \), \( \mu_2 \) and \( \mu_1 \) in condition \eqref{Th:con3} are scalars, they are fixed a priori when solving this condition for the design of matrices {\( P^{-1} = \Omega \), \( \widetilde{\mathcal{F}}_1(x,x_h) \) and \( \widetilde{\mathcal{F}}_2(x,x_h) \).}
\end{remark}
{\begin{remark}\label{Complex}
While the model-based conditions in \eqref{subeq: initial}–\eqref{subeq: decreasing} enforce safety by formulating constraints over $(h+1)$-dimensional Cartesian products of the state, initial, and unsafe sets, the proposed data-driven conditions in Theorem~\ref{thm: main} involve at most a \emph{two-fold} Cartesian product of these sets in condition~\eqref{Th:con3}, while no Cartesian products of the initial and unsafe sets are required for conditions~\eqref{Th:con1}–\eqref{Th:con2}. This substantial reduction in set-product complexity significantly alleviates the computational burden associated with safety synthesis for time-delayed systems and constitutes a key distinguishing feature of the proposed data-driven framework.
\end{remark}}

\subsection{Computation of RK-CBC and R-SC}\label{sec: Computation}
To obtain the RK-CBC and its associated R-SC, we first present Lemma \ref{SOS}, which reformulates conditions \eqref{Th:con1}-\eqref{Th:con3} as an SOS optimization problem. Finally, we provide Algorithm \ref{alg}, which details the required steps to obtain the proposed data-driven results.

\begin{lemma}[\textbf{Data-Driven  SOS Design}]\label{SOS}
	Consider the state set $\mathcal X$, the initial set $\mathcal X_a $, and the unsafe set $\mathcal X_b $, each of which is outlined by vectors of polynomial inequalities as $ \mathcal{X}=\left\{ x  \in \mathbb{R}^n \mid \mathcal{Q}(x) \geq 0\right\}$, $\mathcal X_{a}=\left\{x \in \mathbb{R}^n \mid \mathcal{Q}_{a}(x) \geq\right.$ $0\}$, and $\mathcal X_{b}=\left\{x \in \mathbb{R}^n \mid \mathcal{Q}_{b}(x) \geq 0\right\}$, respectively. Then, $\mathcal{B}(\mathbf{x})= x^\top P x + {\kappa} \sum_{i=1}^{h} \lambda^i x^\top_{i} P x_{i}$ is an RK-CBC for dt-IAUPS-td in~\eqref{eq: dt-NPS}, {and $u= \widetilde{\mathcal{F}}_1(x,x_h) P x +\widetilde{\mathcal{F}}_2(x,x_h) P x_h = \mathcal{F}_1(x,x_h) x+\mathcal{F}_2(x,x_h) x_h$ is its R-SC} if there exist a matrix $P \succ 0$, state-dependent matrices $\widetilde{\mathcal{F}}_1(x,x_h)$, $\widetilde{\mathcal{F}}_2(x,x_h)$, constants $ \mu_1, \mu_2, \eta, \beta  \in \mathbb{R}^{+}$, with $\beta>\eta$, $\Delta = (1+\mu_1)(1+\mu_2)-1$, $\kappa, \lambda \in (0,1)$, an SOS polynomial $\alpha: \mathcal{X}^2\!\to\!\mathbb{R}^+$, and vectors of SOS polynomials {$\Phi(x,x_h)$,$\Phi_h(x,x_h)$,} $\Phi_{a}(x)$, $\Phi_{b}(x) $, such that
	\begin{subequations}\label{LEMMA1}
		\begin{align}\label{L1} 
			-&x^\top P x - \Phi^\top_a(x)\mathcal{Q}_a(x)+ {\dfrac{\eta}{1+\kappa(\dfrac{\lambda-\lambda^{h+1}}{1-\lambda})}}, \\\label{L2} 
			&x^\top P x - \Phi^\top_b(x)\mathcal{Q}_b(x) - \beta, \\\label{L3} 
			& \widetilde{\mathcal{H}} {+} \alpha(x,x_h)\,\widetilde{\mathcal{S}}\!-\! \Big(\! \Phi^\top(x,x_h )\mathcal{Q}(x) \!+\! \Phi^\top_h(x,x_{h} )\mathcal{Q}(x_h)\!\Big)\mathds{I}_{\varepsilon \times\varepsilon},
		\end{align}
are SOS polynomials with $\varepsilon={4} n+4(N+M)$, and $\widetilde{\mathcal{H}}$ and $\widetilde{\mathcal{S}}$ as in \eqref{T2}.
	\end{subequations}
\end{lemma}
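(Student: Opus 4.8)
The plan is to recognize the three certificates \eqref{L1}--\eqref{L3} as a standard Positivstellensatz-type (generalized S-procedure) sufficient relaxation of conditions \eqref{Th:con1}--\eqref{Th:con3} of Theorem~\ref{thm: main}, and then invoke that theorem verbatim to conclude that $\mathcal{B}$ is an RK-CBC and that $u$ with $\mathcal{F}_j = \widetilde{\mathcal{F}}_j(x,x_h)P$ is its R-SC. The argument rests on three elementary facts: (i) a scalar SOS polynomial is nonnegative on all of $\mathbb{R}^n$; (ii) an SOS \emph{matrix} polynomial (a symmetric polynomial matrix whose associated scalar form $v^\top(\cdot)\,v$ is SOS in the lifted variables) is positive semidefinite for every argument; and (iii) the inner product of a vector of SOS polynomials with a vector of polynomials that is componentwise nonnegative on a set is itself nonnegative on that set.

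First I would dispatch \eqref{L1}. Since the polynomial in \eqref{L1} is SOS, it is nonnegative for every $x \in \mathbb{R}^n$; restricting to $x \in \mathcal{X}_a$, where $\mathcal{Q}_a(x) \ge 0$ componentwise and $\Phi_a(x)$ is a vector of SOS polynomials, the term $\Phi_a^\top(x)\mathcal{Q}_a(x)$ is nonnegative, so $x^\top P x \le \eta/\big(1+\kappa(\lambda-\lambda^{h+1})/(1-\lambda)\big)$, which is exactly \eqref{Th:con1}. An identical argument on \eqref{L2}, now using $\mathcal{Q}_b(x) \ge 0$ and the SOS vector $\Phi_b(x)$ on $\mathcal{X}_b$, gives $x^\top P x \ge \beta$ for all $x \in \mathcal{X}_b$, i.e. \eqref{Th:con2}.

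Next I would treat the matrix condition \eqref{L3}. Being an SOS matrix polynomial, the expression $\widetilde{\mathcal{H}} + \alpha(x,x_h)\widetilde{\mathcal{S}} - \big(\Phi^\top(x,x_h)\mathcal{Q}(x) + \Phi_h^\top(x,x_h)\mathcal{Q}(x_h)\big)\mathds{I}_{\varepsilon\times\varepsilon}$ is positive semidefinite for all $(x,x_h) \in \mathbb{R}^n \times \mathbb{R}^n$; here $\varepsilon = 4n + 4(N+M)$ is precisely the block size of $\widetilde{\mathcal{H}}$ and $\widetilde{\mathcal{S}}$ in \eqref{T2} (four diagonal blocks of dimension $n$ and four of dimension $\varphi = M+N$), so the subtracted scalar-times-identity term is dimensionally consistent. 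Restricting to $(x,x_h) \in \mathcal{X}^2$, both $\mathcal{Q}(x)$ and $\mathcal{Q}(x_h)$ are componentwise nonnegative and $\Phi,\Phi_h$ are SOS vectors, so the subtracted term is a nonnegative multiple of the identity; hence $\widetilde{\mathcal{H}} + \alpha(x,x_h)\widetilde{\mathcal{S}} \succeq 0$ on $\mathcal{X}^2$, which is \eqref{Th:con3}. Since $\alpha$ is SOS it is in particular nonnegative on $\mathcal{X}^2$, matching the requirement $\alpha:\mathcal{X}^2 \to \mathbb{R}^+$ used in the S-procedure step in the proof of Theorem~\ref{thm: main} (if strict positivity is wanted, one replaces $\alpha$ by $\alpha$ plus a small positive constant). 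With \eqref{Th:con1}--\eqref{Th:con3} in force and the scalars $P, \mu_1,\mu_2,\eta,\beta,\kappa,\lambda$ as well as $\widetilde{\mathcal{F}}_1,\widetilde{\mathcal{F}}_2$ carried over unchanged, Theorem~\ref{thm: main} applies directly and yields that $\mathcal{B}(\mathbf{x}) = x^\top P x + \kappa\sum_{i=1}^h \lambda^i x_i^\top P x_i$ is an RK-CBC and that $u = \mathcal{F}_1(x,x_h)x + \mathcal{F}_2(x,x_h)x_h$ with $\mathcal{F}_j = \widetilde{\mathcal{F}}_j(x,x_h)P$ is its R-SC, completing the proof.

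The main obstacle is not the logical skeleton, which is a routine SOS relaxation of polynomial (in)equalities over basic semialgebraic sets followed by an appeal to the preceding theorem, but rather the bookkeeping that legitimizes the relaxation: verifying that the single scalar SOS multiplier $\alpha$ multiplying $\widetilde{\mathcal{S}}$ together with the identity-scaled combinations $\Phi^\top(x,x_h)\mathcal{Q}(x)$ and $\Phi_h^\top(x,x_h)\mathcal{Q}(x_h)$ genuinely dominates the set-membership constraints defining $\mathcal{X}^2$, confirming that $\varepsilon$ equals the block dimension of $\widetilde{\mathcal{H}}$ in \eqref{T2}, and ensuring that the matrix decision variable appears linearly (through $P^{-1}$ rather than $P$), so that \eqref{L1}--\eqref{L3} is indeed an SOS program in the stated unknowns $(P^{-1},\widetilde{\mathcal{F}}_1,\widetilde{\mathcal{F}}_2,\alpha,\Phi,\Phi_h,\Phi_a,\Phi_b)$ consistent with the hypotheses of Theorem~\ref{thm: main}.
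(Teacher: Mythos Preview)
Your proposal is correct and follows essentially the same route as the paper: show that the SOS certificates \eqref{L1}--\eqref{L3}, combined with nonnegativity of the multiplier terms on the relevant semialgebraic sets, imply conditions \eqref{Th:con1}--\eqref{Th:con3}, and then invoke Theorem~\ref{thm: main}. The paper's proof is terser but identical in substance, so your extra bookkeeping (dimension check on $\varepsilon$, linearity in $P^{-1}$) only makes the argument more explicit.
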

\begin{proof}
    Since $\Phi_{a}(x)$ is an SOS polynomial, we have
	$\Phi_{a}^\top(x)\mathcal Q_{a}(x)\ge 0$ on
	$\mathcal X_a$. Given that  \eqref{L1} is also  an SOS polynomial on $\mathcal X_a$, one has
	$$-x^\top P x + {\dfrac{\eta}{1+\kappa(\dfrac{\lambda-\lambda^{h+1}}{1-\lambda})}}
	~\ge~
	\Phi_{a}^\top(x)\,\mathcal Q_{a}(x)
	~\ge~0,$$
	which results in $\mathcal B(\mathbf x)\le \eta$ on $\mathcal X^{h+1}_a$. Similarly, one can show that the feasibility of \eqref{L2}
	implies \eqref{Th:con2}.
	As the last step of the proof, since 
	{$\Phi(x,x_h)$ and $\Phi_h(x,x_h)$ are both} SOS polynomials, and given that \eqref{L3}  is also an SOS polynomial, \emph{i.e.,}
{\begin{equation*}
\widetilde{\mathcal{H}}  {+} \alpha(x,x_h)\,\widetilde{\mathcal{S}}-\! \Big(\! \Phi^\top\!(x,x_h )\mathcal{Q}(x) + \Phi^\top_h(x,x_{h} )\mathcal{Q}(x_h)\!\Big)\mathds{I}_{\varepsilon \times\varepsilon} \succeq 0,
\end{equation*}}
then \eqref{Th:con3} holds, which concludes the proof.
\end{proof}
\begin{algorithm}[t!]
	\caption{Data-driven design of RK-CBC and its R-SC}\label{alg}
	\begin{algorithmic}[1]
		\REQUIRE Safety specification $\Theta \!=\! (\mathcal{X}_a, \mathcal{X}_b)$, $\delta$ in \eqref{eq:Wdelta}, upper bounds on the maximum degrees of $\mathcal{M}(x)$ and $\mathcal G(x)$
		\STATE {Collect ${\mathit{U}_{-}}, {\mathit{X}_{-}} ,{\mathit{X}_{h}}, {\mathit{X}_{+}} $} as in~\eqref{eq: datarep1}
		\STATE {Form ${\mathit{M}_{-}}, {\mathit{M}}_h$, $\mathit{G}$ in \eqref{eq: datarep2}, and $ \mathds{L}(x)$} in~\eqref{transform}
		\STATE Initialize $\mu_1,\mu_2 \in \mathbb{R}^{+}$, $\kappa, \lambda \in (0,1)$\label{init-lambda}
		\STATE Solve \eqref{L3} using  \textsf{SOSTOOLS}~\cite{prajna2004sostools} and solver \textsf{Mosek}~\cite{mosek} for $P^{-1} = \Omega$, {$\widetilde{\mathcal{F}}_1(x,x_h)$, and $\widetilde{\mathcal{F}}_2(x,x_h)$}\\
		\STATE Construct $\mathcal{B}(x)=x^\top P x + \kappa \sum_{i=1}^{h} \lambda^i x^\top_{i} P x_{i}$ using $P$, and compute $\gamma= (1+\frac{1}{\mu_1} +\frac{1}{\mu_2})\|\sqrt{P}\|^2$
		\STATE Design level sets $\eta, \beta$ in conditions \eqref{L1} and \eqref{L2} using the constructed $\mathcal{B}(\mathbf{x})$\\
		\ENSURE  Guaranteed robust safety over an infinite time horizon when $(1 + \frac{1}{\mu_1} + \frac{1}{\mu_2})\Vert\sqrt{P}\Vert^2\delta \leq  \beta (1-\lambda)$, with RK-CBC $\mathcal{B}(\mathbf{x})=x^\top P x + \kappa \sum_{i=1}^{h} \lambda^i x^\top_{i} P x_{i}$, and R-SC {$u= \widetilde{\mathcal{F}}_1(x,x_h) P x+ \widetilde{\mathcal{F}}_2(x,x_h) P x_h$}
	\end{algorithmic}
\end{algorithm}

\section{Case Study}\label{sec: Case}
We validate our data-driven framework through a set of case studies, which include an academic system, a jet engine compressor~\cite{Tabuada-Jet}, and a spacecraft~\cite{khalil2002control}, while adapting them to accommodate delays and disturbances. The main goal is to synthesize an RK-CBC and its corresponding R-SC by following the step-by-step procedure outlined in Algorithm~\ref{alg}. A summary of the key results is presented in Table~\ref{tab:system-configurations}. All simulations were conducted on a \textsf{MacBook} with an \textsf{M2} chip and 32~\textsf{GB} of memory.

\begin{table*}[t!]
	\centering
	\caption{{Overview of the results for unknown dt-IAUPS-td systems: $\mathtt{T}$ denotes the number of collected data points, $n$ the state dimension, $\delta$ the disturbance bound in~\eqref{eq:Wdelta}, $h$ the time delay, $\gamma$ and  $\lambda$ as defined in~\eqref{subeq: decreasing}, $\eta$ and $\beta$ specify the initial and unsafe level sets, and $\kappa$, $\mu_1$, and $\mu_2$ are tuning parameters.}}
\label{tab:system-configurations}
\resizebox{\linewidth}{!}{\begin{tabular}{@{}llccccccccccc@{}}
	\toprule
	Case study & $\mathtt{T}$ & $\delta$& System degree&  $n$& $h$ & $\eta$ & $\beta$ & $\gamma$ & $\mu_1$ & $\mu_2$ & \(\kappa\) & $\lambda$ \\ 
	\midrule\midrule
	
Academic system
	& $10$ & $ 1.8 \times 10^{-3}$ & $2$ &  $2$ & $3$ &  $36.41$ &   $    40.43$ &  $  28.28$  &  $0.59$ & $0.92$ & $0.38$ & $0.94$\\
	\midrule
	
Jet engine compressor 
	& $10$ & $ 8 \times 10^{-4}$  & $3$ & $2$ &   $4$  &  $ 36.7$ &  $ 38.13$ &  $ 7.17$ &  $0.63$  & $0.92$ & $0.23$  & $0.91$ \\
	\midrule
	
Spacecraft      
	& $13$ & $1.2 \times 10^{-3}$  & $2$ & $3$ &   $3$  & $ 1.07 \times 10^{3} $  & $ 1.1 \times 10^{3}$  & $ 171.38$ & $0.68$ & $0.96$ & $0.41$  & $0.93$   \\
	\bottomrule
\end{tabular}}
\end{table*}
\subsection{Case Study 1: Academic System}\label{Case_study_1}
We consider a system with the dynamics described as
\begin{align}\notag
	{x}_1(k+1)&\!=\! x_1(k) \!+\! 0.1x_2(k)\!+\!0.05 x_1(k\!-\!3) \!+\! 0.06x_1^2(k\!-\!3)\\\notag &\,\,\, \,\, + 0.1u(k)+ w_1(k),\\\notag
	{x}_2(k+1)& \!=\!x_2(k) +0.1x_1(k) \!+\! 0.1x_1^2(k)+0.04x_1^2(k\!-\!3) \\\label{academic}&\,\,\, \,\,  \!+\!0.02 x_1(k\!-\!3)\!+\!0.01x_2(k\!-\!3)\!+\! 0.1u(k)\!+\! w_2(k),
\end{align}
where $h=3$. The system in \eqref{academic} can be expressed in the form of dt-IAUPS-td in Definition~\ref{eq: dt-NPS} along with its relevant matrices as 
{\begin{align*}
	{A}_1&= \begin{bmatrix}
		1  &  0.1 & 0 \\
		0.1 & 1   & 0.1
	\end{bmatrix}\!\!,~~
	A_2=	\begin{bmatrix}
		0.05 & 0 &0.06 \\
		0.02 & 0.01 & 0.04
	\end{bmatrix}\!\!,\\ 
	B &= 	\begin{bmatrix}
		0.1 \\
		0.1
	\end{bmatrix}\!\!,~~
	\mathcal{M}(x)= [x_1; x_2; x_1^2 ],~~ \mathcal{G}=1.
\end{align*}}
The regions of interest are given by {$\mathcal X= [-5,5]^2$, $\mathcal X_{a} = [-1,1]^2$, $\mathcal X_{b} = [3,5]^2 \cup [-5,-3]^2.$}
We assume that the {matrices $A_1$, $A_2$, and $B$,} as well as the exact form of $\mathcal{M}(x)$, are all unknown. The only available prior information is that $\mathcal{M}(x)$ is a polynomial of degree at most $2$ and {$\mathcal{G}(x,x_h)=[1; x_1x_{1_h}]$ (\emph{i.e.,} $x_1x_{1_h} =x_1(k)x_1(k-h)$)}. The disturbance $w$ is assumed to be norm-bounded, with an upper bound given by $\delta = 1.8 \times 10^{-3}$. 

We construct a polynomial dictionary for \(\mathcal{M}(x)\) that spans all monomials up to degree \(2\), namely
$$\mathcal{M}(x)= [x_1; x_2; x_1 x_2; x_1^2 ; x_2^2].$$
Through the satisfaction of condition \eqref{L3}, following the steps outlined in Algorithm~\ref{alg}, we compute the RK-CBC matrix
{\begin{align*}
	P =	\begin{bmatrix}
5.33 & 0.23\\ \star & 4.71
	\end{bmatrix}\!\!,
\end{align*}}
and the corresponding R-SC as
\begin{align}\notag
u(k)&= 0.23x_1^{2}(k) + 1.8x_1(k)x_2(k) - 2.69x_1(k) - 3.49x_2(k)\\\notag &~~~
- 0.06x_1^{2}(k-3) + 1.41x_1(k-3)x_2(k-3)\\\label{Cont} &~~~ - 2.99x_1(k-3) - 0.31x_2(k-3).
\end{align}
Given the matrix \(P\) and the conditions in~\eqref{L1} and~\eqref{L2}, we design
\(\eta =  36.41\) and \(\beta = 40.43\),
while fulfilling the constraint \( \gamma \delta \leq \beta(1-\lambda) \) in~\eqref{eq: climit}, with
\(\mu_1 = 0.59\), \(\lambda = 0.94\), \(\mu_2 = 0.92\), \(\gamma =   28.28 \), and $\kappa =0.38$, which results in
\begin{equation*}
	\dfrac{\eta}{1+\kappa(\dfrac{\lambda-\lambda^{h+1}}{1-\lambda})}=  18.12.
\end{equation*}
Then, under Theorem~\ref{thm: model-based}, the system in \eqref{academic} is guaranteed to be \emph{robustly} safe over an infinite time horizon in the presence of both an unknown-but-bounded disturbance and a time-invariant delay. The simulation results for the system in \eqref{academic} are presented in Fig.~\ref {fig:Atraj}.
\begin{figure*}[t]
	\centering
	\subfloat[\label{fig:Aa}]{
		\includegraphics[width=0.31\textwidth]{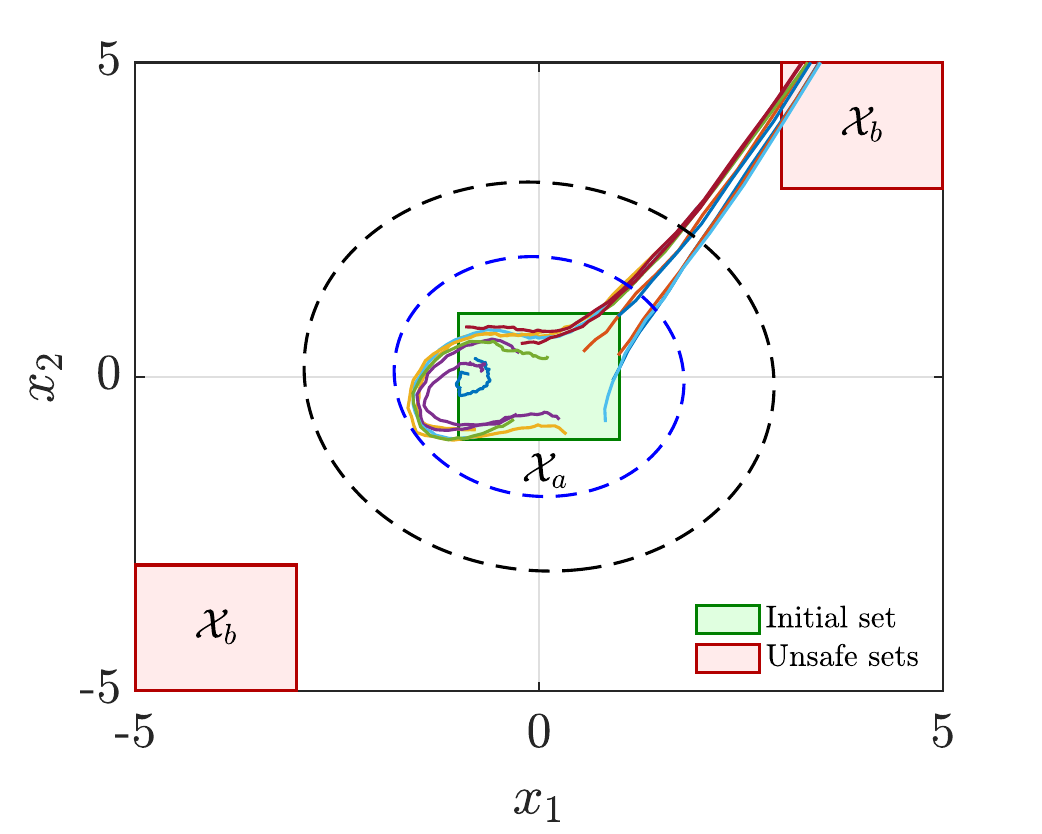}
	}\hspace{0.1cm}
	\subfloat[\label{fig:Ab}]{
		\includegraphics[width=0.31\textwidth]{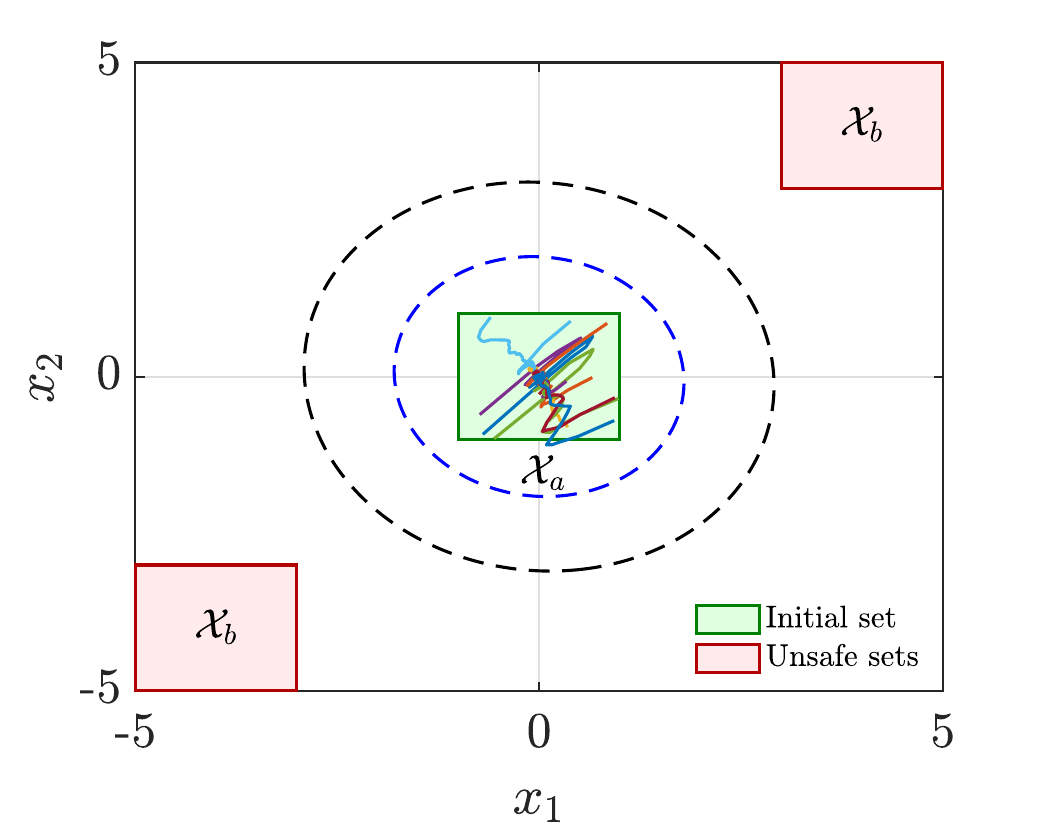}
	}\hspace{0.1cm}
	\subfloat[\label{fig:Ac}]{
		\includegraphics[width=0.31\textwidth]{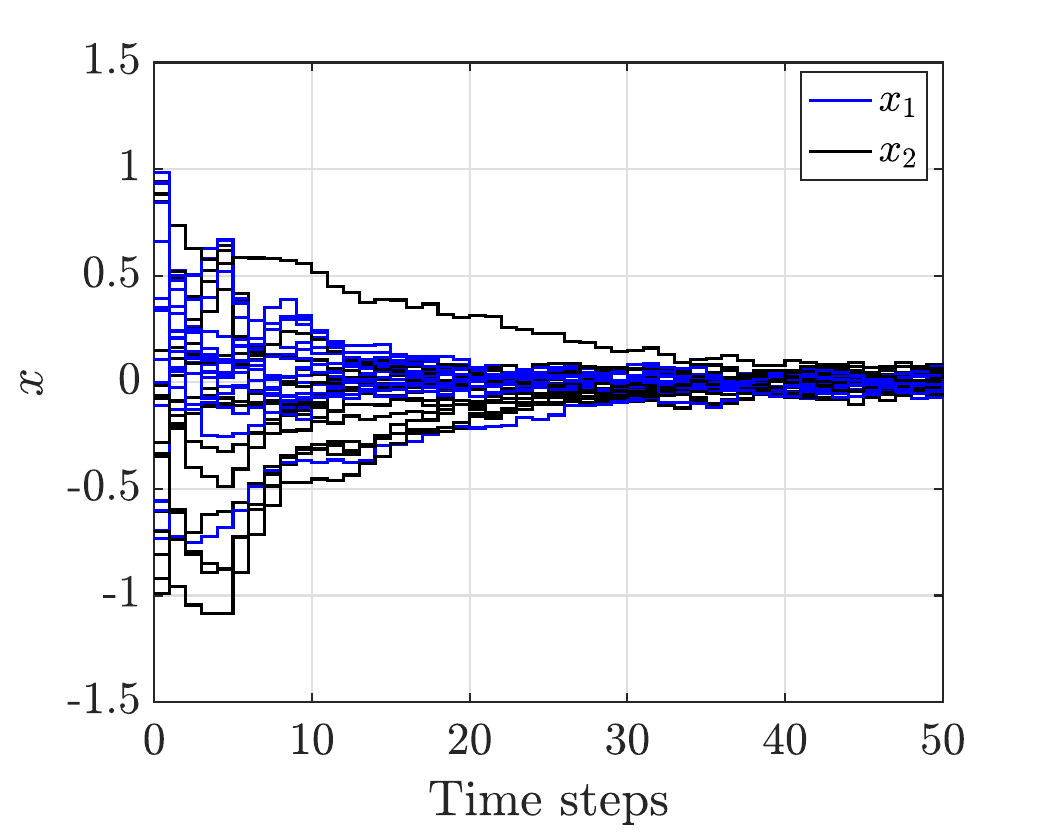}
	}
	\caption{\textbf{Academic system}. Plot (a) illustrates the open-loop trajectories, while plot (b) displays the trajectories under the {designed safety controller} in~\eqref{Cont}, starting from different initial conditions in $\mathcal{X}_a \subset [-1,1]^2$, with the initial and unsafe bounds in~\eqref{Th:con1} and~\eqref{Th:con2} indicated by~\protect\dashedlinea{0.73cm} and~\protect\dashedline{0.73cm}, respectively. The simulations are generated using $15$ different arbitrary disturbance trajectories satisfying~\eqref{eq:Wdelta}, indicating the robustness of the proposed framework to disturbances. Plot (c) depicts the trajectories over $50$ time steps, demonstrating compliance with the specified safety property~$\Theta$.}
	\label{fig:Atraj}
\end{figure*}

\subsection{Case Study 2: Jet Engine Compressor}
In the second case study, we consider a physical system, namely a jet engine compressor~\cite{Tabuada-Jet}, whose dynamics are
{\begin{align}\notag
	{x}_1(k+1)& = x_1(k) - 0.1x_2(k-4) - 0.15 x^2_1(k-4)\\\notag &~~~- 0.05 x^3_1(k) +w_1(k),\\\label{jet}
	{x}_2(k+1) & =\! x_2(k) + 0.1x_1(k-4)  + 0.1u(k)  + w_2(k),
\end{align}}
where $x_1(k)$ denotes the mass flow, $x_2(k)$ the pressure rise, {$u(k)$ represents the control input}, and the system is subject to a time-invariant delay of $h=4$. The system described in \eqref{jet} can be represented in the dt-IAUPS-td form as outlined in Definition~\ref{eq: dt-NPS}, along with its associated matrices
{\begin{align*}
	A_1&= \begin{bmatrix}
		1 & 0 &0 &-0.05 \\
		0& 1 & 0 & 0
	\end{bmatrix}\!\!,~~
	A_2=	\begin{bmatrix}
		0 & -0.1 &-0.15 &0 \\
		0.1 & 0 & 0 &0
	\end{bmatrix}\!\!,\\ 
B&=	\begin{bmatrix}
		0 \\
		0.1
	\end{bmatrix}\!,~~
	\mathcal{M}(x)= [x_1; x_2; x_1^2 ; x_1^3],~~ \mathcal{G}=1.
\end{align*}}
In contrast to the first case study, in which the maximum degree of the system monomials was $2$, the jet engine compressor exhibits a maximum monomial degree of $3$. The regions of interest are given by $\mathcal X= [-10,10]^2$, $\mathcal X_{a} = [-2,2]^2$, $\mathcal X_{b} = [5,10]^2 \cup [-10,-5]^2$. We assume that the {matrices \(A_1\), \(A_2\) and \(B\),} as well as actual \(\mathcal{M}(x)\), are all unknown, and the only available prior information is that $\mathcal{M}(x)= [x_1; x_2; x_1 x_2; x_1^2 ; x_2^2 ; x_1^3]$, \(\mathcal{G}=1\), and \( \delta = 8\times10^{-4} \).

Following the steps outlined in Algorithm~\ref{alg}, we compute the RK-CBC matrix
{\begin{align*}
	P =	\begin{bmatrix}
1.51 & -0.02\\ \star & 1.92
	\end{bmatrix}\!\!,
\end{align*}}
and the corresponding R-SC as
{\begin{align}\notag
u(k)&=0.08x_1(k)x_2(k) + 0.06x_2^{2}(k) - 1.53x_2(k)\\\notag &~~~
+ 0.04x_1(k-4)x_2^{2}(k-4) + 0.07x_1^{2}(k-4)\\\label{Cont1} &~~~
+ 0.12x_1(k-4)x_2(k-4) - 0.25x_1(k-4).
\end{align}}
{Given the matrix \(P\) and the conditions in~\eqref{L1} and~\eqref{L2}, we design
$\eta = 36.7$ and $\beta=38.13$,
while fulfilling the constraint \( \gamma \delta \leq \beta(1-\lambda) \) in~\eqref{eq: climit}, with
\(\mu_1 = 0.63\), \(\lambda = 0.91\), \(\mu_2 = 0.92\), \(\gamma =  7.17\) and $\kappa =0.23$, leading to
\begin{equation*}
	\dfrac{\eta}{1+\kappa(\dfrac{\lambda-\lambda^{h+1}}{1-\lambda})}=   21.2.
\end{equation*}}
Under Theorem~\ref{thm: model-based}, the jet engine compressor in~\eqref{jet} is guaranteed to be \emph{robustly} safe over an infinite time horizon in the presence of both disturbances and delays, as illustrated by the simulation results in Fig.~\ref{fig:Btraj}.

\subsection{Case Study 3: Spacecraft}
As the last case study, we examine a spacecraft~\cite{khalil2002control}, whose dynamics are
{\begin{align} \notag
		{x}_1(k+1) &= 	{x}_1(k) \!+\! \frac{\sigma_{2} - \sigma_{3}}{\sigma_{1}}x_2(k) x_3(k) 
		\!+\! \frac{1}{\sigma_1} u_1(k)  \\\notag &~~~  + w_1(k), \\\notag
		{x}_2(k+1) &=  {x}_2(k) + \frac{\sigma_{3} - \sigma_{1}}{\sigma_{2}} x_1(k-3) x_3(k-3)\\\notag &~~~ 
		+ \frac{1}{\sigma_{2}} u_2(k) + w_2(k), \\\notag
		{x}_3(k+1) &=  {x}_3(k) + \frac{\sigma_{1} - \sigma_{2}}{\sigma_{3}}\, x_1(k-3)\, x_2(k-3)\\\label{space}&~~~  
		+ \frac{1}{\sigma_{3}} u_3(k) + w_3(k),
\end{align}}
with time delay $h=3$. Additionally, $\sigma_{1}$ to $\sigma_{3}$
are the principal moments of inertia, {and $u(k)=\left[u_1(k) ; u_2(k) ; u_3(k)\right]$ is the torque input}. The spacecraft in \eqref{space} can be expressed in the form of dt-IAUPS-td in Definition~\ref{eq: dt-NPS}, along with its relevant matrices as
{\begin{align*}
	A_1 &\!=\!\begin{bmatrix}  1& 0  & 0&\frac{\sigma_{2} - \sigma_{3}}{\sigma_{1}} & 0 &0 \\
		0 & 1& 0  & 0& 0& 0 \\
		0& 0 & 1 & 0& 0& 0\end{bmatrix}\!\!, ~~ B \!= \! \begin{bmatrix} \frac{1}{\sigma_{1}}  & 0  & 0\\ 0 &  \frac{1}{\sigma_{2}} &0 \\ 0 &0 &\frac{1}{\sigma_{3}}    \end{bmatrix}\!\!,\\
A_2 &\!=\! \begin{bmatrix}  0& 0 & 0 & 0& 0& 0 \\ 0& 0 & 0& 0 &  \frac{\sigma_{3} - \sigma_{1}}{\sigma_{2}}  & 0  \\0& 0&0&0&0& \frac{\sigma_{1} - \sigma_{2}}{\sigma_{3}} \end{bmatrix}\!\!,\\ ~~ \mathcal{M}(x) &\!=\! [x_1; x_2; x_3; x_2 x_3; x_1 x_3 ; x_1 x_2 ], ~~  \mathcal{G}\!=\!\mathds{I}_3.
\end{align*}}
Unlike the first two case studies, which feature two-dimensional dynamics, the spacecraft’s dynamics are three-dimensional. We assume that the {matrices $A_1$, $A_2$, and $B$,} as well as the exact $\mathcal{M}(x)$, are all unknown, with the only prior information being that $\mathcal{M}(x)= [x_1; x_2;x_3; x_2 x_3; x_1 x_3 ; x_1 x_2 ; x_2^2]$, $\mathcal{G}=\mathds{I}_3$, and $\delta = 1.2 \times 10^{-3}$. The regions of interest are given by $\mathcal X= [-10,10]^3$, $\mathcal X_{a} = [-2,2]^3$, $\mathcal X_{b} = [5,10]^3 \cup [-10,-5]^3$. We compute the RK-CBC matrix using Algorithm~\ref{alg} as
{\begin{align*}
	P &=	\begin{bmatrix}
48.73 & {0} & -0.05\\ \star & 5.73 & 0 \\ \star & \star & 33.53
	\end{bmatrix}\!\!,
\end{align*}}
and the corresponding R-SC as
\begin{align}\notag
	u_1(k)&= 0.03x_1^{2}(k) + 8.92x_2(k)x_3(k) - 12.46x_1(k)\\\notag &~~~-0.15x_1(k-3)x_2(k-3),\\\notag u_2(k) &= 0.02x_1^{2}(k) - 3.73x_2(k)-0.88x_1^{2}(k-3)\\\notag &~~~ + 0.03x_1(k-3)x_3(k-3) + 0.23x_2^{2}(k-3),\\\notag u_3(k)&=-0.05x_3^{2}(k) - 6.32x_3(k)+ 0.16x_1^{2}(k-3)\\\label{safety-c-3} & ~~~+ 0.01x_3^{2}(k-3).
\end{align}
{Given the matrix \(P\) and the conditions in~\eqref{L1} and~\eqref{L2}, we design
$\eta =  1.07 \times 10^{3} $ and $\beta =  1.1 \times 10^{3}$,
while satisfying the constraint \( \gamma \delta \leq \beta(1-\lambda) \) in~\eqref{eq: climit}, with
\(\mu_1 = 0.68\), \(\lambda = 0.93\), \(\mu_2 = 0.96\), \(\gamma = 171.38\) and $\kappa =0.41$, resulting in
\begin{equation*}
	\dfrac{\eta}{1+\kappa(\dfrac{\lambda-\lambda^{h+1}}{1-\lambda})}=   518.35.
\end{equation*}}
Under Theorem~\ref{thm: model-based}, the spacecraft in~\eqref{space} is guaranteed to satisfy robust safety over an infinite time horizon in the presence of a time-invariant delay and an unknown-but-bounded disturbance, as illustrated by the simulation results in Fig.~\ref{fig:Ctraj}.

\begin{figure*}[t]
	\centering
	\subfloat[\label{fig:Ba}]{
		\includegraphics[width=0.31\textwidth]{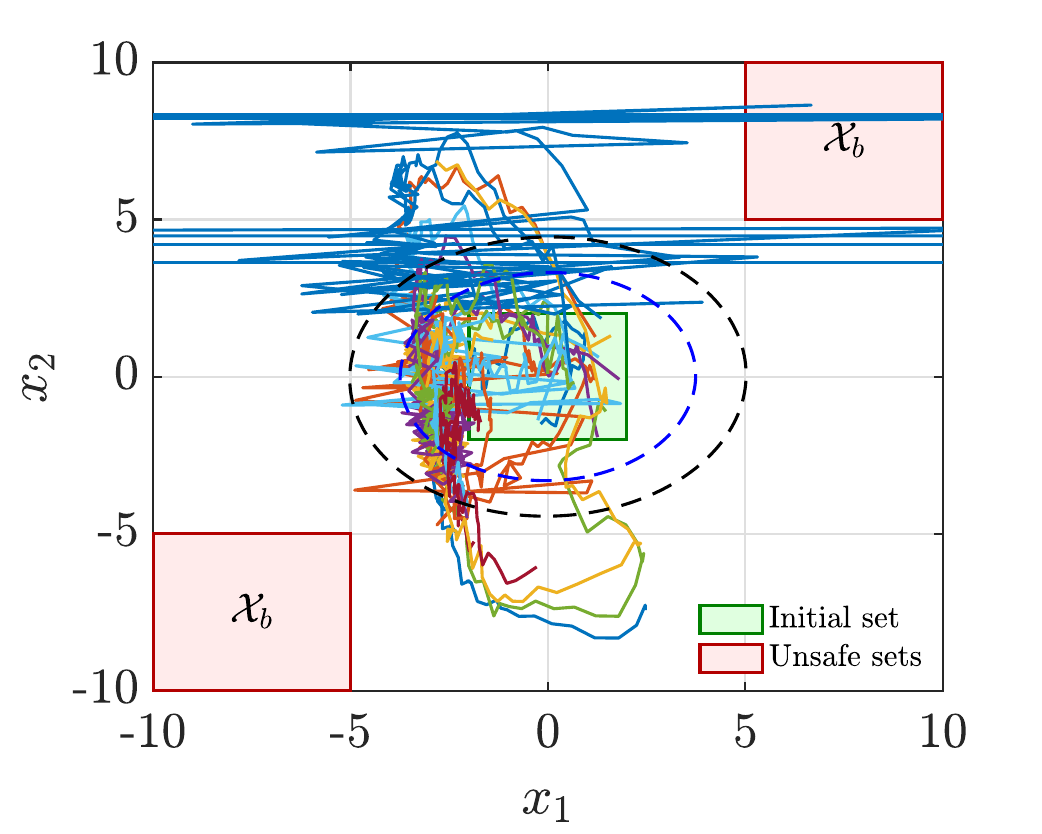}
	}\hspace{0.1cm}
	\subfloat[\label{fig:Bb}]{
		\includegraphics[width=0.31\textwidth]{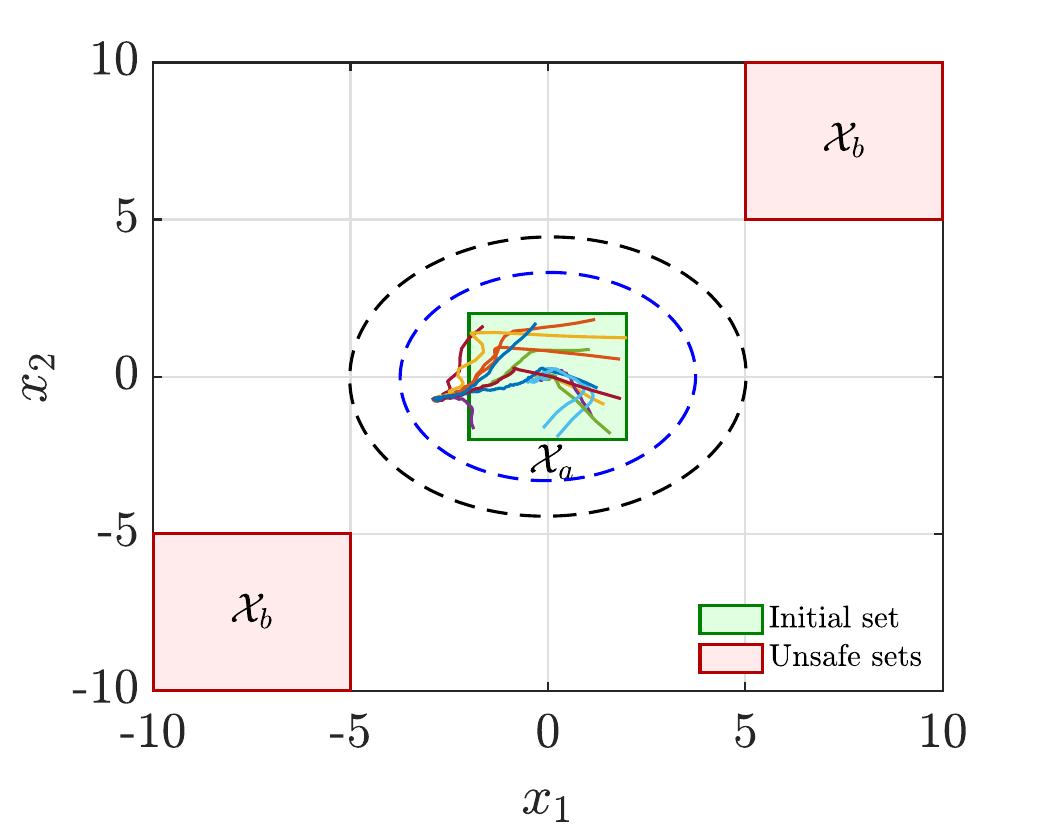}
	}\hspace{0.1cm}
	\subfloat[\label{fig:Bc}]{
		\includegraphics[width=0.31\textwidth]{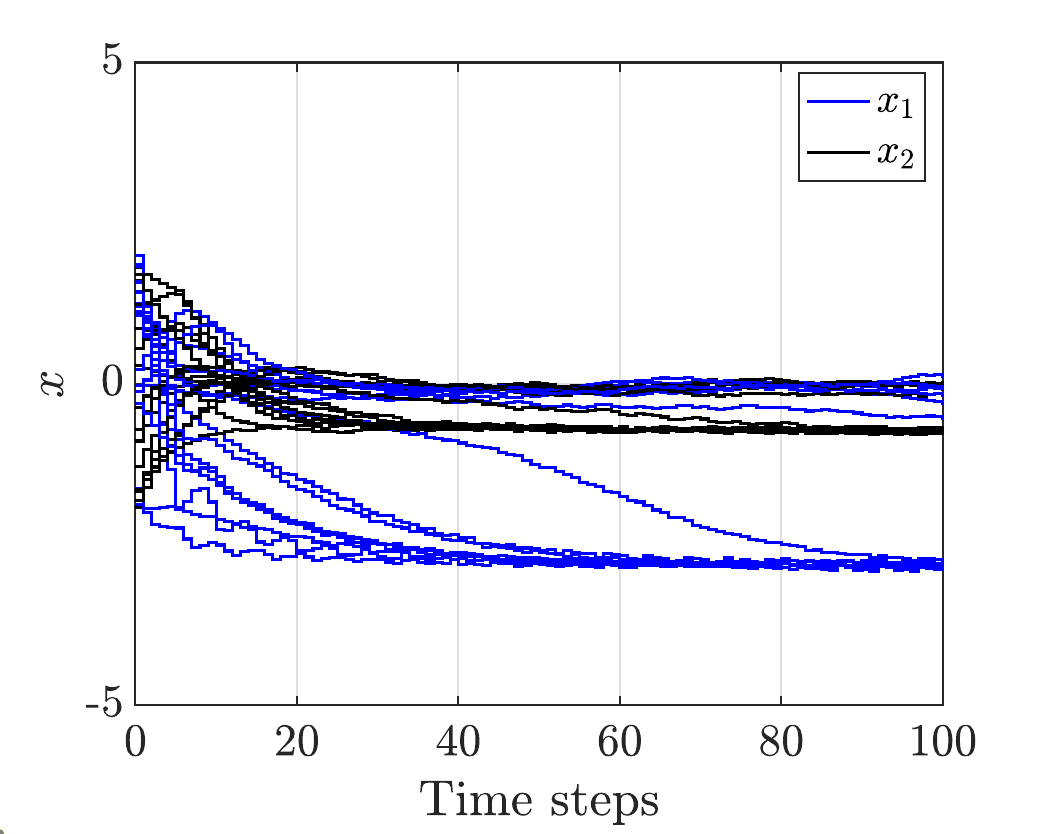}
	}
	\caption{\textbf{Jet engine compressor}. Plot (a) shows trajectories under a random controller, while plot (b) presents trajectories with the {designed safety controller} in~\eqref{Cont1}, initialized from $\mathcal{X}_a \subset [-2,2]^2$, with the initial and unsafe bounds in~\eqref{Th:con1} and~\eqref{Th:con2} indicated by~\protect\dashedlinea{0.73cm} and~\protect\dashedline{0.73cm}, respectively. Results are obtained using $25$ disturbance realizations satisfying~\eqref{eq:Wdelta}, demonstrating robustness to disturbances. Plot (c) depicts the state evolution over $100$ time steps, confirming satisfaction of the safety specification~$\Theta$.}
	\label{fig:Btraj}
\end{figure*}

\begin{figure*}[t]
	\centering
	\subfloat[\label{fig:Ca}]{
		\includegraphics[width=0.31\textwidth]{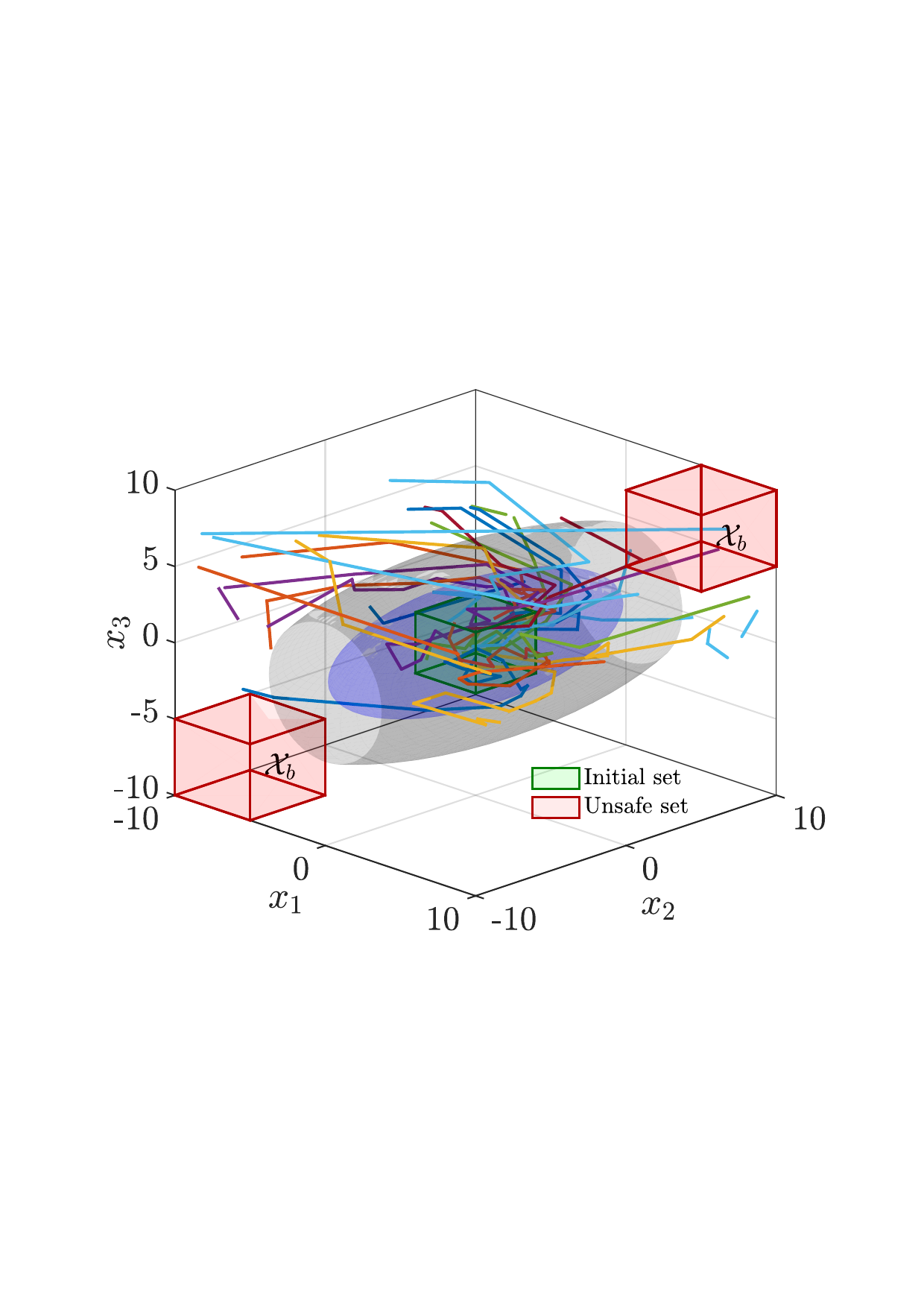}
	}\hspace{0.1cm}
	\subfloat[\label{fig:Cb}]{
		\includegraphics[width=0.31\textwidth]{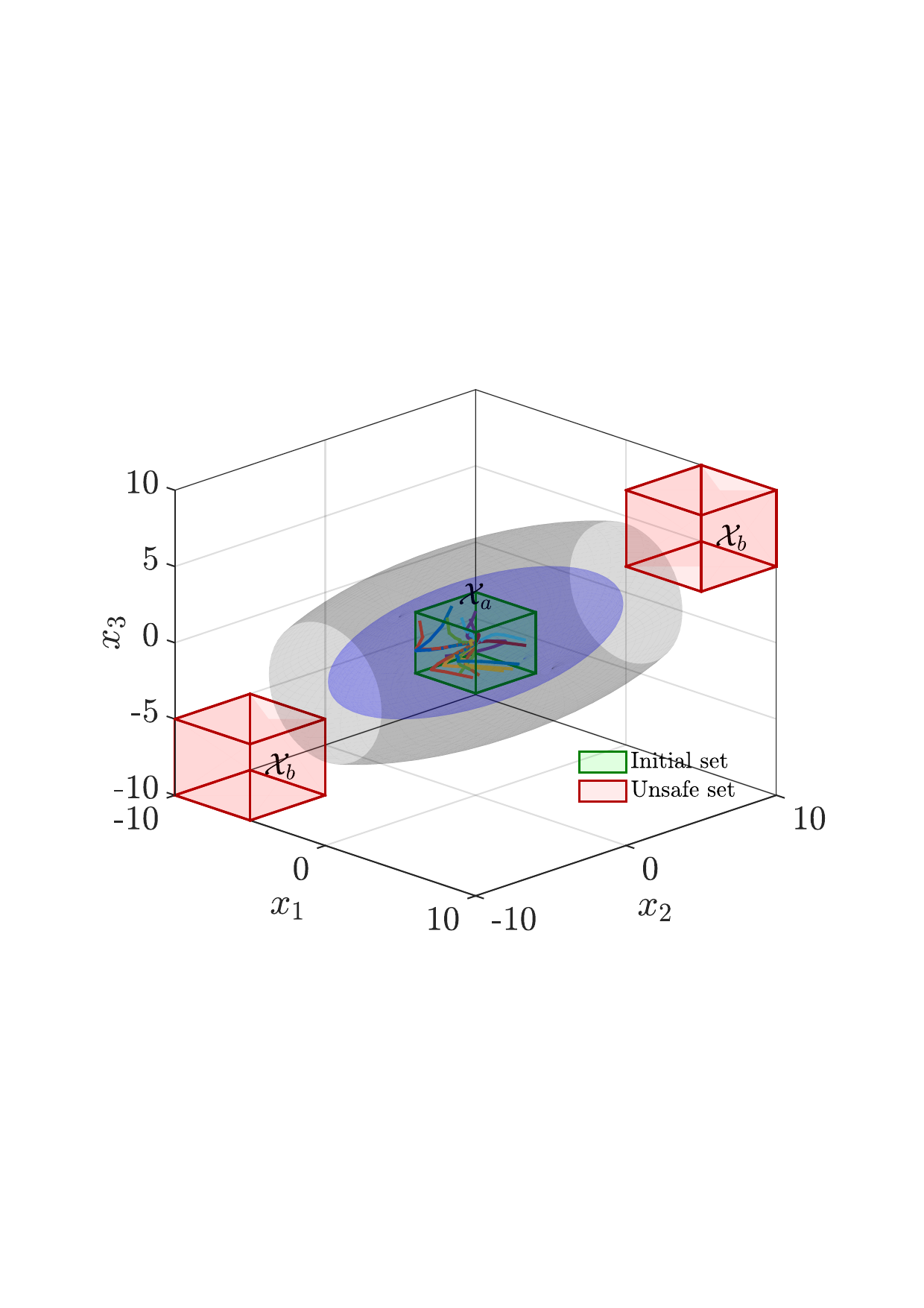}
	}\hspace{0.1cm}
	\subfloat[\label{fig:Cc}]{
		\includegraphics[width=0.31\textwidth]{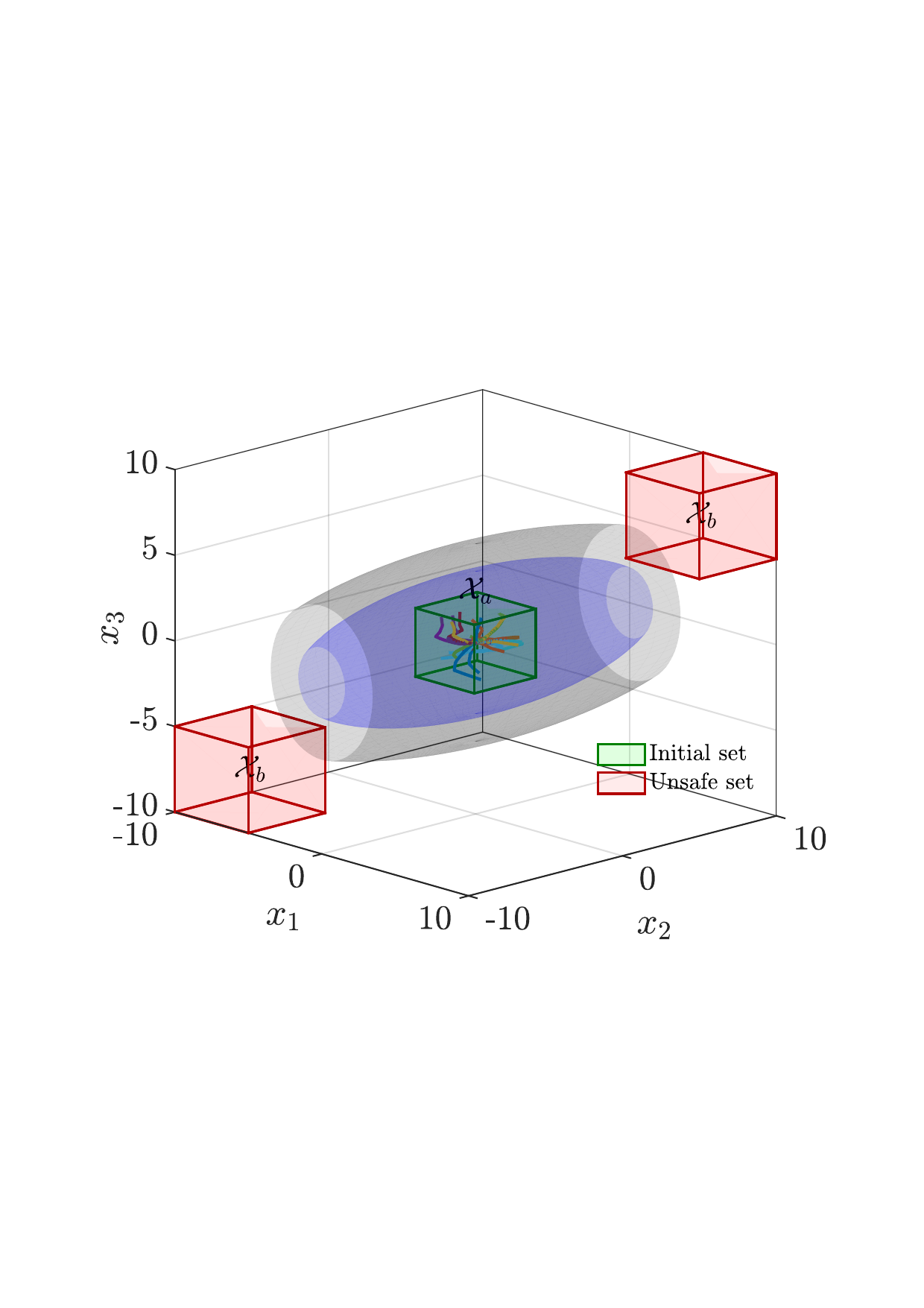}
	}
	\caption{\textbf{Spacecraft}.
		Plot (a) presents trajectories under a random controller, whereas plot (b) shows trajectories with the synthesized {safety controller} in~\eqref{safety-c-3}, initialized from $\mathcal{X}_a \subset [-2,2]^3$, with the initial and unsafe bounds in~\eqref{Th:con1} and~\eqref{Th:con2} marked by~\protect\redsquare\ and~\protect\greensquare, respectively. The results are acquired using $25$ disturbance realizations fulfilling~\eqref{eq:Wdelta}, illustrating robustness to disturbances. Plot (c) depicts the state evolution over $100$ time steps, confirming adherence to the safety specification~$\Theta$.}
	\label{fig:Ctraj}
\end{figure*}
\section{Conclusion}\label{sec: Conclusion}
We developed a data-driven framework for constructing robust Krasovskii control barrier certificates and robust safety controllers for discrete-time input-affine polynomial systems with unknown dynamics, time-invariant delays, and unknown-but-bounded disturbances. The proposed approach addressed the dual challenges of model unavailability with unknown disturbances, as well as delay-dependent safety requirements, by extending classical control barrier certificates to a Krasovskii-based formulation that explicitly incorporated delayed state effects. Relying solely on finite-horizon input–state data, the framework enabled direct safety certification and controller design without requiring explicit system models, while ensuring robust safety over an infinite time horizon. The resulting synthesis problem was formulated as an SOS optimization program, providing a structured and tractable design methodology. The effectiveness of the approach was demonstrated through three different case studies. Developing data-driven approaches for classes of systems beyond polynomial dynamics remains a direction for future work.

\bibliographystyle{ieeetr}
\bibliography{biblio}

\begin{IEEEbiography}[{\includegraphics[width=1in,height=1.25in,clip,keepaspectratio]{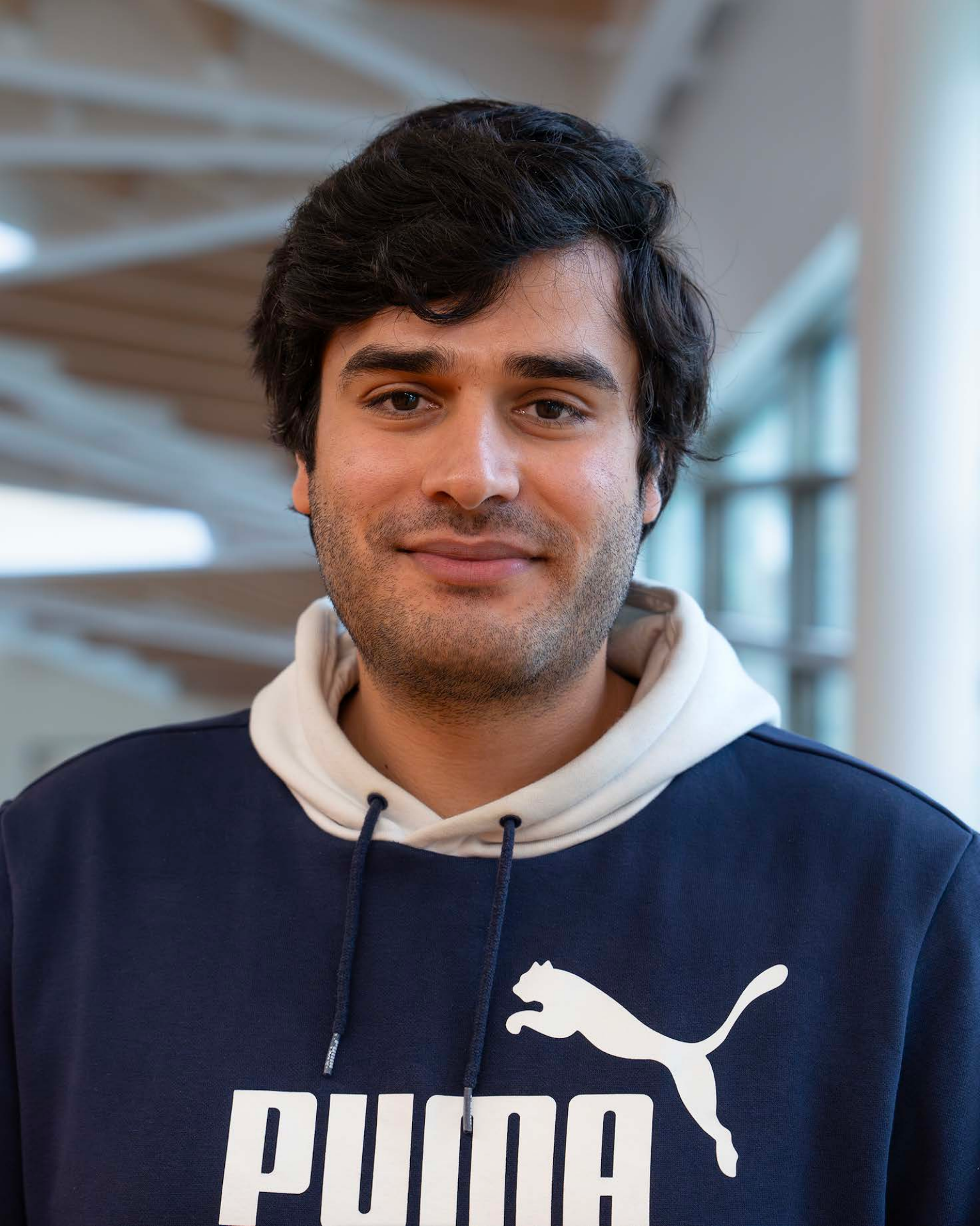}}]{Omid Akbarzadeh}~(Student Member, IEEE) is currently a PhD student in the School of Computing at Newcastle University, U.K. His academic journey commenced at Shiraz University, where he obtained a Bachelor of Science in Electrical and Electronic Engineering. Following this, he pursued a master's degree in Communications and Computer Network Engineering (CCNE) at the Polytechnic University of Turin, Italy (Politecnico di Torino). His research interests include safe cyber-physical systems, communication networks, data-driven approaches, and formal control.
\end{IEEEbiography}

\begin{IEEEbiography}[{\includegraphics[width=1in,height=1.3in,clip,keepaspectratio]{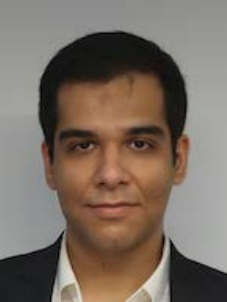}}]{MohammadHossein Ashoori}~(Student Member, IEEE) received his B.Sc. and M.Sc. degrees in Electrical Engineering from Sharif University of Technology (SUT), Tehran, Iran, in 2019 and 2022, respectively. He is currently pursuing his PhD in
	the School of Computing at Newcastle University,
	UK.  His research
	interests include cyber-physical systems (CPS), computer vision, and digital signal processing.
\end{IEEEbiography}

\begin{IEEEbiography}[{\includegraphics[width=1in,height=1.25in,clip,keepaspectratio]{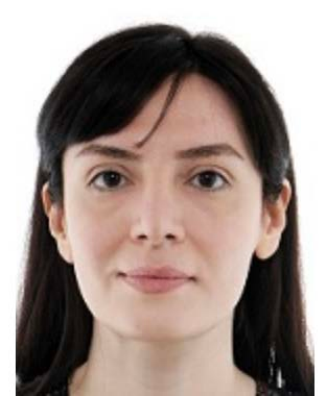}}]{Amy Nejati}~(M'18--SM'25) is an Assistant Professor in the School of Computing at Newcastle University in the United Kingdom. Prior to this, she was a Postdoctoral Associate at the Max Planck Institute for Software Systems in Germany from July 2023 to May 2024. She also served as a Senior Researcher in the Computer Science Department at the Ludwig Maximilian University of Munich (LMU) from November 2022 to June 2023. She received the PhD in Electrical Engineering from the Technical University of Munich (TUM) in 2023. She has received the B.Sc. and M.Sc. degrees both in Electrical Engineering. Her line of research mainly focuses on developing efficient (data-driven) techniques to design and control highly-reliable autonomous systems while providing mathematical guarantees. She was selected as a Best Repeatability Prize Finalist at ACM HSCC 2025 and as one of the CPS Rising Stars 2024.
\end{IEEEbiography}

\begin{IEEEbiography}[{\includegraphics[width=1in,height=1.25in,clip,keepaspectratio]{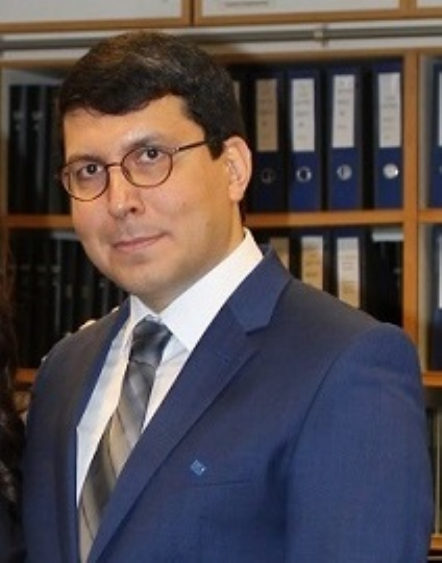}}]{Abolfazl Lavaei}~(M'17--SM'22)
	is an Assistant Professor in the School of Computing at Newcastle University, United Kingdom. Between January 2021 and July 2022, he was a Postdoctoral Associate in the Institute for Dynamic Systems and Control at ETH Zurich, Switzerland. He was also a Postdoctoral Researcher in the Department of Computer Science at LMU Munich, Germany, between November 2019 and January 2021. He received the Ph.D. degree in Electrical Engineering from the Technical University of Munich (TUM), Germany, in 2019. He obtained the M.Sc. degree in Aerospace Engineering with specialization in Flight Dynamics and Control from the University of Tehran, in 2014. He is the recipient of several international awards in the acknowledgment of his work including  Best Repeatability Prize (Finalist) at the ACM HSCC 2025, IFAC ADHS 2024, and IFAC ADHS 2021, HSCC Best Demo/Poster Awards 2022 and 2020, IFAC Young Author Award Finalist 2019, and Best Graduate Student Award 2014 at University of Tehran with the full GPA (20/20). His research interests revolve around the intersection of Control Theory, Formal Methods, and Data Science.
\end{IEEEbiography}

\end{document}